\newtheorem{definition}{Definition}[section]
\newtheorem{theorem}{Theorem}[section]
\newtheorem{lemma}{Lemma}[section]
\newtheorem{obs}{Observation}[section]
\newtheorem{claim}{Claim}[section]
\title{Complexity and Algorithms for Semipaired Domination in Graphs}
\author[1]{Michael A. Henning \thanks{mahenning@uj.ac.za}}
\author[2]{Arti Pandey\thanks{arti@iitrpr.ac.in}}
\author[2]{Vikash Tripathi\thanks{2017maz0005@iitrpr.ac.in}}
\affil[1]{Department of Pure and Applied Mathematics, University of Johannesburg, Auckland Park, 2006 South Africa}
\affil[2]{Department of Mathematics, Indian Institute of Technology Ropar, Nangal Road, Rupnagar, Punjab 140001, INDIA}
\begin{document}
\date{}
\maketitle
\begin{abstract}
For a graph $G=(V,E)$ with no isolated vertices, a set $D\subseteq V$ is called a semipaired dominating set of G if  $(i)$ $D$ is a dominating set of $G$, and $(ii)$ $D$ can be partitioned into two element subsets such that the vertices in each two element set are at distance at most two. The minimum cardinality of a semipaired dominating set of $G$  is called the semipaired domination number of $G$, and is denoted by $\gamma_{pr2}(G)$. The \textsc{Minimum Semipaired Domination} problem is to find a semipaired dominating set of $G$ of cardinality $\gamma_{pr2}(G)$. In this paper, we initiate the algorithmic study of the \textsc{Minimum Semipaired Domination} problem. We show that the decision version of the \textsc{Minimum Semipaired Domination} problem is NP-complete for bipartite graphs and split graphs. On the positive side, we present a linear-time algorithm to compute a minimum cardinality semipaired dominating set of interval graphs and trees. We also propose a $1+\ln(2\Delta+2)$-approximation algorithm for the \textsc{Minimum Semipaired Domination} problem, where $\Delta$ denote the maximum degree of the graph and show that the \textsc{Minimum Semipaired Domination} problem cannot be approximated within $(1-\epsilon) \ln|V|$ for any
$\epsilon > 0$ unless NP $\subseteq$ DTIME$(|V|^{O(\log\log|V|)})$.
\end{abstract}

\vspace*{.2cm}
 Keywords:  Domination, Semipaired Domination, Bipartite Graphs, Chordal Graphs, Graph algorithm, NP-complete, Approximation algorithm.

\section{Introduction}
A \emph{dominating set} in a graph $G$ is a set $D$ of vertices of $G$ such that every vertex in $V(G) \setminus D$ is adjacent to at least one vertex in $D$. The \emph{domination number} of $G$, denoted by $\gamma(G)$, is the minimum cardinality of a dominating set of $G$. The \textsc{Minimum Domination} problem is to find a dominating set of cardinality $\gamma(G)$. More thorough treatment of domination, can be found in the books~\cite{hhs1,hhs2}.
A dominating set $D$ is called a \emph{paired dominating set} if $G[D]$ contains a perfect matching. The \emph{paired domination number} of $G$, denoted by $\gamma_{pr}(G)$ is the minimum cardinality of paired dominating set of $G$. The concept of paired domination was introduced by Haynes and Slater in \cite{paired}.

A relaxed form of paired domination called semipaired domination was introduced by Haynes and Henning~\cite{semi-paired1} and studied further in \cite{semi-paired2, semi-paired3, semi-paired4}. A set $S$ of vertices in a graph $G$ with no isolated vertices is a \emph{semipaired dominating set}, abbreviated a semi-PD-set, of $G$ if $S$ is a dominating set of $G$ and $S$ can be partitioned into $2$-element subsets such that the vertices in each $2$-element set are at distance at most $2$.  In other words, the vertices in the dominating set $S$ can be partitioned into $2$-element subsets such that
if $\{u, v\}$ is a $2$-set, then the distance between $u$ and $v$ is either $1$ or $2$. We say
that $u$ and $v$ are \emph{semipaired}. The \emph{semipaired domination number} of $G$, denoted by $\gamma_{pr2}(G)$, is the minimum cardinality of a semi-PD-set of $G$. Since every paired dominating set is a semi-PD-set, and every semi-PD-set is a dominating set, we have the following observation.

\begin{obs}{\rm (\cite{semi-paired1})}
 \label{ob:chain}
For every isolate-free graph $G$, $\gamma(G) \le \gamma_{pr2}(G) \le \gamma_{pr}(G)$.
\end{obs}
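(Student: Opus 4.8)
The plan is to prove the two inequalities of the chain separately, each directly from the definitions, since no structural lemma is needed here. Throughout I use that $G$ is isolate-free, which is part of the hypothesis and guarantees that dominating sets, paired dominating sets, and semi-PD-sets all exist, so that $\gamma(G)$, $\gamma_{pr2}(G)$, and $\gamma_{pr}(G)$ are well-defined.

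First I would establish $\gamma(G) \le \gamma_{pr2}(G)$. Let $S$ be a semi-PD-set of $G$ with $|S| = \gamma_{pr2}(G)$. By the definition of a semi-PD-set, $S$ is in particular a dominating set of $G$, so $\gamma(G) \le |S| = \gamma_{pr2}(G)$.

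Next I would establish $\gamma_{pr2}(G) \le \gamma_{pr}(G)$. Let $S$ be a paired dominating set of $G$ with $|S| = \gamma_{pr}(G)$. Then $S$ is a dominating set and $G[S]$ contains a perfect matching $M$. Partition $S$ into $2$-element subsets, one for each edge of $M$; if $\{u,v\}$ is such a $2$-set, then $uv \in E(G)$, so the distance between $u$ and $v$ is $1 \le 2$, i.e.\ $u$ and $v$ are semipaired. Hence $S$ is a semi-PD-set of $G$, giving $\gamma_{pr2}(G) \le |S| = \gamma_{pr}(G)$. Combining the two inequalities yields the stated chain. I do not anticipate any genuine obstacle: the observation is immediate once the definitions are unpacked, the only mild care being to invoke the isolate-free hypothesis so that all three parameters make sense.
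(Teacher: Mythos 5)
Your proof is correct and follows exactly the paper's own (one-line) justification: every paired dominating set is a semi-PD-set via the perfect matching, and every semi-PD-set is a dominating set by definition. Nothing further is needed.
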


By Observation~\ref{ob:chain}, the semipaired domination number is squeezed between two fundamental domination parameters, namely the domination number and the paired domination number.

More formally, the minimum semipaired domination problem and its decision version are defined as follows:

\noindent\underline{\textsc{Minimum Semipaired Domination} problem (MSPDP)}
\\
[-12pt]
\begin{enumerate}
  \item[] \textbf{Instance}: A graph $G=(V,E)$.
  \item[] \textbf{Solution}: A semi-PD-set $D$ of $G$.
  \item[] \textbf{Measure}: Cardinality of the set $D$.
\end{enumerate}

\noindent\underline{\textsc{Semipaired Domination Decision} problem (SPDDP)}
\\
[-12pt]
\begin{enumerate}
  \item[] \textbf{Instance}: A graph $G=(V,E)$ and a positive integer $k \le |V|$.
  \item[] \textbf{Question}: Does there exist a semi-PD-set $D$ in $G$ such that $|D| \le k$?
\end{enumerate}

In this paper, we initiate the algorithmic study of the semipaired domination problem. The main contributions of the paper are summarized below. In Section~\ref{Sec:2}, we discuss some definitions and notations.
In Section~\ref{Sec:3}, we discuss the difference between the complexity of paired domiantion and semipaired domination in graphs. In Section~\ref{Sec:4}, we show that the \textsc{Semipaired Domination Decision} problem is NP-complete for bipartite and split graphs. In Section~\ref{Sec:5} and Section~\ref{Sec:6}, we propose a linear-time algorithms to solve the \textsc{Minimum Semipaired Domination} problem in interval graphs and trees respectively. In Section~\ref{Sec:7}, we propose an approximation algorithm for the \textsc{Minimum Semipaired Domination} problem in general graphs. In Section~\ref{Sec:8}, we discuss an approximation hardness result. Finally, Section~\ref{Sec:9}, concludes the paper.

\section{Terminology and Notation}
\label{Sec:2}

For notation and graph theory terminology, we in general follow~\cite{total2}. Specifically, let $G = (V, E)$ be a graph with vertex set $V=V(G)$ and edge set $E=E(G)$, and let $v$ be a vertex in $V$. The \emph{open neighborhood} of $v$ is the set $N_G(v) = \{u \in V \, | \, uv \in E\}$ and the \emph{closed neighborhood of $v$} is $N_G[v] = \{v\} \cup N_G(v)$. Thus, a set $D$ of vertices in $G$ is a dominating set of $G$ if $N_G(v) \cap D \ne \emptyset$ for every vertex $v \in V \setminus D$, while $D$ is a total dominating set of $G$ if $N_G(v) \cap D \ne \emptyset$ for every vertex $v \in V$. The \emph{distance} between two vertices $u$ and $v$ in a connected graph $G$, denoted by $d_G(u,v)$, is the length of a shortest $(u,v)$-path in $G$.
If the graph $G$ is clear from the context, we omit it in the above expressions.
We write $N(v)$, $N[v]$ and $d(u,v)$ rather than $N_G(v)$, $N_G[v]$ and $d_G(u,v)$, respectively.

For a set $S \subseteq V(G)$, the subgraph induced by $S$ is denoted by $G[S]$.  If $G[C]$, where $C \subseteq V$, is a complete subgraph of $G$, then $C$ is a \emph{clique} of $G$. A set $S \subseteq V$ is an \emph{independent set} if $G[S]$ has no edge. A graph $G$ is \emph{chordal} if every cycle in $G$ of length at least four has a \emph{chord}, that is, an edge joining two non-consecutive vertices of the cycle.
A chordal graph $G=(V,E)$ is a \emph{split graph} if $V$ can be partitioned into two sets $I$ and $C$ such that $C$ is a clique and $I$ is an independent set.
A vertex $v\in V(G)$ is a \emph{simplicial} vertex of $G$ if $N_G[v]$
is a clique of $G$. An ordering $\alpha=(v_1,v_2,...,v_n)$ is a {\it
perfect elimination ordering} (PEO) of vertices of $G$ if $v_i$ is a simplicial
vertex of $G_i=G[\{v_i,v_{i+1},...,v_n\}]$ for all $i$, $1\leq i\leq
n$. Fulkerson and Gross~\cite{gross} characterized chordal graphs, and showed that a graph $G$ is chordal if and only if
it has a PEO. A graph $G = (V,E)$ is \emph{bipartite} if $V$ can be partitioned into two disjoint sets $X$  and $Y$ such that every edge of $G$ joins a vertex in $X$ to a vertex  in $Y$, and such a partition $(X,Y)$ of $V(G)$ is called a \emph{bipartition} of $G$. Further, we denote such a bipartite graph $G$ by $G=(X,Y,E)$. A graph $G$ is an \emph{interval graph} if there exists a one-to-one correspondence between its vertex set and a family of closed intervals in the real line, such that two vertices are adjacent if and only if their corresponding intervals intersect. Such a family of intervals is called an \emph{interval model} of a graph.

In the rest of the paper, all graphs considered are simple connected graphs with at least two vertices, unless otherwise mentioned specifically. We use the standard notation $[k] = \{1,\ldots,k\}$.
For most of the approximation related terminologies, we refer to  \cite{ausiello,klasing}.

\section{Complexity difference between paired domination and semipaired domination}
\label{Sec:3}
In this section, we make an observation on complexity difference between paired domination and semipaired domination. We show that the decision version of the \textsc{Minimum paired domination} problem is NP-complete for GP$4$ graphs, but the \textsc{Minimum Semipaired Domination} problem is easily solvable for GP$4$ graphs. The class of GP$4$ graphs was introduced by Henning and Pandey in \cite{semito}. Below we recall the definition of GP$4$ graphs.

\begin{definition}[GP$4$-graph]
\label{defn1}
A graph $G=(V,E)$ is called a \emph{GP}$4$-\emph{graph} if it can be obtained from a general connected graph $H=(V_{H},E_{H})$ where $V_{H}=\{v_{1},v_{2},\ldots,v_{n_{H}}\}$, by adding a path of length~$3$ to every vertex of $H$. Formally, $V = V_{H} \cup \{ w_{i},x_{i},y_{i},z_{i} \mid 1\leq i \leq n_{H} \, \}$ and $E=E_{H}\cup \{v_{i}w_{i},w_{i}x_{i},x_{i}y_{i},y_{i}z_{i}\mid 1\leq i \leq n_H \, \}$.
\end{definition}

\begin{theorem}
 \label{ob:GP4}
If $G$ is a \emph{GP}$4$-\emph{graph}, then $\gamma_{pr2}(G) = \frac{2}{5}|V(G)|$.
\end{theorem}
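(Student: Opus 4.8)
The plan is to prove matching bounds $2n_H \le \gamma_{pr2}(G) \le 2n_H$, where $n_H = |V(H)|$. Since the vertex set of $G$ consists of the $n_H$ vertices of $H$ together with the $4n_H$ path vertices $w_i,x_i,y_i,z_i$, we have $|V(G)| = 5n_H$, so $2n_H = \tfrac{2}{5}|V(G)|$, which is exactly the claimed value.

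For the upper bound I would exhibit an explicit semi-PD-set. Take $D = \{\,w_i, y_i \mid 1 \le i \le n_H\,\}$, which has cardinality $2n_H$. Since $N_G[w_i] = \{v_i, w_i, x_i\}$ and $N_G[y_i] = \{x_i, y_i, z_i\}$, the two vertices $w_i$ and $y_i$ jointly dominate $v_i,w_i,x_i,y_i,z_i$; as $i$ ranges over $[n_H]$ this covers all of $V(G)$, so $D$ is a dominating set. Partitioning $D$ into the $2$-element sets $\{w_i, y_i\}$ works because $d_G(w_i, y_i) = 2$ via the path $w_i x_i y_i$. Hence $D$ is a semi-PD-set and $\gamma_{pr2}(G) \le 2n_H$.

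For the lower bound it suffices to show that every dominating set $S$ of $G$ (in particular every semi-PD-set) satisfies $|S \cap B_i| \ge 2$ for each $i$, where $B_i = \{v_i, w_i, x_i, y_i, z_i\}$; since the sets $B_1,\dots,B_{n_H}$ partition $V(G)$, summing yields $|S| \ge 2n_H$. To see $|S \cap B_i| \ge 2$: the vertex $z_i$ is a leaf whose only neighbour is $y_i$, so domination forces $S \cap \{y_i, z_i\} \neq \emptyset$; and $N_G[w_i] = \{v_i, w_i, x_i\}$ is disjoint from $\{y_i, z_i\}$, so domination of $w_i$ forces $S \cap \{v_i, w_i, x_i\} \neq \emptyset$. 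These two nonempty intersections lie in disjoint subsets of $B_i$, giving $|S \cap B_i| \ge 2$. Combining the two bounds gives $\gamma_{pr2}(G) = 2n_H = \tfrac{2}{5}|V(G)|$.

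There is essentially no deep obstacle here; the only subtlety worth flagging is the choice of the five-element blocks $B_i$ in the counting argument. Using the four-element sets $\{w_i, x_i, y_i, z_i\}$ would not suffice, since a dominating set can meet such a set only in $\{y_i\}$ (when $v_i \in S$); folding $v_i$ into the block is what makes the argument go through. Note also that the lower bound uses only the domination property, so in fact $\gamma(G) = \gamma_{pr2}(G) = \gamma_{pr}(G) = \tfrac25|V(G)|$ for GP$4$-graphs, with the semipairing condition needed only to certify that the set $D$ above is a legal semi-PD-set.
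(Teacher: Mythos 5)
Your main argument is correct, and it is the natural proof: the paper itself states this theorem without proof, and your two bounds (the explicit set $\{w_i,y_i \mid i\in[n_H]\}$ of size $2n_H$ with $d(w_i,y_i)=2$, and the partition into blocks $B_i=\{v_i,w_i,x_i,y_i,z_i\}$ each forced to contain one vertex of $\{v_i,w_i,x_i\}$ to dominate $w_i$ and one vertex of $\{y_i,z_i\}$ to dominate $z_i$) establish $\gamma_{pr2}(G)=2n_H=\tfrac{2}{5}|V(G)|$ cleanly. Your observation that the five-element blocks, not the four-element tails, are the right counting unit is exactly the point.

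However, your closing aside that ``in fact $\gamma(G)=\gamma_{pr2}(G)=\gamma_{pr}(G)=\tfrac{2}{5}|V(G)|$'' is wrong for the paired domination number. The equality $\gamma(G)=\gamma_{pr2}(G)=2n_H$ does follow (your lower bound holds for all dominating sets, and Observation~\ref{ob:chain} sandwiches $\gamma_{pr2}$ above $\gamma$), but your certificate $\{w_i,y_i\}$ is not a paired dominating set, since $w_iy_i\notin E$ and $G[D]$ has no perfect matching. In fact $\gamma_{pr}(G)>2n_H$ in general: already for $H=K_2$, a paired dominating set of size $4$ would need exactly two vertices per block with $y_i$ (not $z_i$) present, and a case check shows $y_i$ can only be matched to $x_i$, which leaves $v_i$ dominated only via some $v_j$ in the set, which in turn cannot be matched. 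This is consistent with the paper: Lemma~\ref{l:lemGP4} ties $\gamma_{pr}$ on GP$4$-graphs to paired domination of the underlying graph $H$, and if $\gamma_{pr}$ were identically $\tfrac{2}{5}|V(G)|$ on GP$4$-graphs, the NP-completeness asserted in Theorem~3.2 would fail. Strike that remark; the rest stands.
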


\begin{lemma}
\label{l:lemGP4}
If $G$ is a GP$4$-graph constructed from a graph $H$ as in Definition~\ref{defn1}, then $H$ has a paired dominating set of cardinality~$k$, $k\leq n_H$ if and only if $G$ has a semi-PD-set of cardinality $2 n_H +k$.
\end{lemma}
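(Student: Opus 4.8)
The plan is to connect paired dominating sets of the base graph $H$ with semipaired dominating sets of $G$ by exploiting the local structure of the pendant path $v_i w_i x_i y_i z_i$ attached at each $v_i$.

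The key structural ingredient is a \emph{pendant-path bound}: every semi-PD-set $S$ of $G$ satisfies $|S\cap\{w_i,x_i,y_i,z_i\}|\ge 2$ for each $i\in[n_H]$. Indeed, $z_i$ must be dominated, so $y_i\in S$ or $z_i\in S$; if $z_i\in S$ but $y_i\notin S$, then $z_i$ cannot be semipaired (its only neighbour is $y_i$, and the only vertices within distance~$2$ of $z_i$ are $y_i,x_i,w_i$), so $y_i\in S$ in every case; and if $y_i$ is the only gadget vertex in $S$, then $y_i$ can be semipaired only with $x_i$, $z_i$ or $w_i$, forcing a second gadget vertex into $S$. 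Summing over $i$ gives $|S|\ge 2n_H+|S\cap V_H|$, so $|S\cap V_H|\le |S|-2n_H$; this is also the estimate underlying Theorem~\ref{ob:GP4}.

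For the forward implication, let $D$ be a paired dominating set of $H$ with $|D|=k$ and let $M$ be a perfect matching of $H[D]$, and set $S=D\cup\{x_i,y_i\mid i\in[n_H]\}$. Then $S$ dominates $G$ (since $\{x_i,y_i\}$ dominates $\{w_i,x_i,y_i,z_i\}$ and $D$ dominates $V_H$), and $S$ is semipaired by the partition consisting of the edges of $M$ (each a distance-$1$ pair) together with the pairs $\{x_i,y_i\}$ (also distance~$1$); hence $S$ is a semi-PD-set of cardinality $2n_H+k$. For the converse, take a semi-PD-set $S$ of $G$ with $|S|=2n_H+k$ and $k\le n_H$; by the pendant-path bound $|S\cap V_H|\le k$, and we must extract from $S$ a paired dominating set of $H$ of cardinality exactly $k$. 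I would first \emph{normalise} each pendant path: using that $y_i\in S$ always, replace $S\cap\{w_i,x_i,y_i,z_i\}$ by $\{x_i,y_i\}$ when that intersection has size~$2$, and when it has size $\ge 3$ transfer the surplus onto $v_i$ (adjoining $v_i$ to $S$ if it is not already present), arguing that this keeps $S$ a semi-PD-set and does not increase $|S|$ — a pendant-path vertex $w_i$ is dropped only when $v_i$ is dominated from inside $H$ or $v_i$ has just been inserted. After normalisation no $w_i$ lies in $S$, so $S\cap V_H$ dominates $H$; one then re-routes the pairing so that the vertices of $S\cap V_H$ are matched among themselves by edges of $H$, obtaining a paired dominating set $D\subseteq V_H$ with $|D|\le k$. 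Finally, since $k\le n_H$ and $H$ is connected (and $k$, $|D|$ are both even), $D$ can be enlarged to cardinality exactly $k$ by repeatedly adjoining an edge of $H$ disjoint from the current set.

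The main obstacle is the converse normalisation — transforming an arbitrary semi-PD-set of $G$, without increasing its size, into the canonical form $D\cup\{x_i,y_i\mid i\in[n_H]\}$ with $D$ a paired dominating set of $H$. The delicate points are handling pendant paths that carry three or four selected vertices and re-matching the $V_H$-part so that it is internally matched using genuine edges of $H$ rather than the looser distance-two links that semipairing would otherwise permit.
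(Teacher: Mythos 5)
Your forward direction is correct and is the natural construction: take $D\cup\{x_i,y_i\mid i\in[n_H]\}$, paired by the matching edges of $H[D]$ together with the edges $x_iy_i$. (The paper itself states this lemma without proof, so there is nothing to compare against.) Your pendant-path bound $|S\cap\{w_i,x_i,y_i,z_i\}|\ge 2$ is also correct, though with two small slips: $w_i$ is at distance $3$ from $z_i$, not $2$, and the assertion that $y_i\in S$ ``in every case'' fails --- the gadget configuration $\{x_i,z_i\}$, with $x_i$ and $z_i$ semipaired at distance $2$, dominates all of $w_i,x_i,y_i,z_i$ without containing $y_i$. Only the weaker count of two gadget vertices survives, which is all you actually use.

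The converse, however, has a genuine and unfixable gap at exactly the step you flag as delicate: ``re-routing the pairing so that the vertices of $S\cap V_H$ are matched among themselves by edges of $H$.'' No such re-routing exists in general, because semipairing allows distance-$2$ partners while a paired dominating set requires a perfect matching by genuine edges, and a distance-$2$ dominating pair need not be convertible into an adjacent dominating pair of the same cardinality. Concretely, take $H=C_5$ with vertices $v_1,\dots,v_5$, so $n_H=5$. The set $S=\{v_1,v_3\}\cup\{x_i,y_i\mid i\in[5]\}$ is a semi-PD-set of $G$ of cardinality $12=2n_H+2$ (the pair $\{v_1,v_3\}$ is at distance $2$ and dominates $C_5$), yet $C_5$ has no paired dominating set of cardinality $2$, since two adjacent vertices dominate at most four of its five vertices. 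Hence the statement as written is false for $k=2$, and no proof of the stated equivalence can be completed. The lemma is evidently intended to read ``paired dominating set of $G$ of cardinality $2n_H+k$'' --- that is the version from which the NP-completeness of \textsc{Minimum Paired Domination} on GP$4$-graphs, asserted immediately after the lemma, actually follows; under that reading your normalisation of the pendant paths is the right idea, and the pairing inside $V_H$ then comes from the matching condition rather than from any distance-$2$ re-routing.
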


Since the decision version of the \textsc{Minimum Paired Domination} problem is known to be NP-complete for general graphs~\cite{paired}, the following theorem follows directly from Lemma~\ref{l:lemGP4}.

\begin{theorem}
The decision version of the \textsc{Minimum Paired Domination} problem is NP-complete for GP$4$-graphs.
\end{theorem}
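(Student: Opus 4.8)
The plan is to establish the two ingredients of NP-completeness: membership in NP, and NP-hardness by a polynomial-time many-one reduction from the decision version of the \textsc{Minimum Paired Domination} problem on general graphs, which is NP-complete~\cite{paired}.

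Membership in NP is immediate and uses nothing about the GP$4$ structure: given an instance $(G,k)$, a certificate consists of a set $D\subseteq V(G)$ with $|D|\le k$ together with a set $M\subseteq E(G[D])$, and one checks in polynomial time that $D$ dominates $G$ and that $M$ is a perfect matching of $G[D]$, which together say exactly that $D$ is a paired dominating set of $G$ of size at most $k$.

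For the reduction, let $(H,k)$ be an instance of the decision version of the \textsc{Minimum Paired Domination} problem. We may assume $H$ is connected with $n_H=|V(H)|\ge 2$, since restricting to connected graphs does not affect NP-hardness, and moreover that $k\le n_H$, since $\gamma_{pr}(H)\le n_H$ is well known for connected graphs of order at least two, so any instance with $k\ge n_H$ is a trivial yes-instance and may be mapped to a fixed yes-instance. Construct the GP$4$-graph $G$ from $H$ as in Definition~\ref{defn1}; this takes time linear in the size of $H$, and $|V(G)|=5n_H$. The reduction outputs the instance $(G,\,2n_H+k)$. By Lemma~\ref{l:lemGP4}, for every $k\le n_H$ the graph $H$ has a paired dominating set of cardinality $k$ if and only if $G$ has a paired dominating set of cardinality $2n_H+k$; since in addition $\gamma_{pr}(G)\ge\gamma_{pr2}(G)=2n_H$ by Observation~\ref{ob:chain} and Theorem~\ref{ob:GP4}, the lemma forces $\gamma_{pr}(G)=2n_H+\gamma_{pr}(H)$, so it is irrelevant whether we require a paired dominating set of $G$ of cardinality at most $2n_H+k$ or exactly $2n_H+k$. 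Consequently $(H,k)$ is a yes-instance if and only if $(G,2n_H+k)$ is; as $G$ is a GP$4$-graph this is a valid polynomial-time reduction, and together with membership in NP the theorem follows.

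I do not anticipate a genuine obstacle here, since all the combinatorial content --- showing that a paired dominating set of $G$ must place at least two vertices on each attached length-$3$ path and that its surplus over $2n_H$ induces a paired dominating set of the base graph $H$ --- is already carried by Lemma~\ref{l:lemGP4}. What remains needs only routine care: reducing from the connected case, restricting to the range $k\le n_H$ covered by the lemma, and passing between the inequality in the decision problem and the equality in the lemma.
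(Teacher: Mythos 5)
Your proof is correct and follows the paper's own route exactly: the paper derives this theorem in a single sentence from Lemma~\ref{l:lemGP4} together with the NP-completeness of paired domination on general graphs, and you merely supply the routine details (NP membership, the restriction to $k\le n_H$ via $\gamma_{pr}(H)\le n_H$, and the passage between ``exactly $2n_H+k$'' and ``at most $2n_H+k$'' using $\gamma_{pr}(G)\ge\gamma_{pr2}(G)=2n_H$). The one point worth flagging is that you quote Lemma~\ref{l:lemGP4} as asserting that $G$ has a \emph{paired} dominating set of cardinality $2n_H+k$, whereas the paper's statement literally says ``semi-PD-set''; your reading is clearly the intended one --- the literal statement would be useless for this theorem and inconsistent with Theorem~\ref{ob:GP4} --- but the reduction genuinely needs the paired-domination version of the lemma, so the discrepancy lies in the paper's wording rather than in your argument.
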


\newpage
\section{NP-completeness Results}
\label{Sec:4}
In this section, we study the NP-completeness of the \textsc{Semipaired Domination Decision} problem. We show that the \textsc{Semipaired Domination Decision} problem is NP-complete for bipartite graphs and split graphs.

\subsection{NP-completeness proof for bipartite graphs}
\begin{theorem}\label{t:bipartite}
The \textsc{Semipaired Domination Decision} problem is NP-complete for bipartite graphs.
\end{theorem}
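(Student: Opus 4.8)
The plan is to establish NP-completeness of the \textsc{Semipaired Domination Decision} problem for bipartite graphs by a reduction from a known NP-complete problem. Membership in NP is routine: given a set $D$, one can verify in polynomial time that $D$ dominates $G$ and that $D$ admits a partition into $2$-element subsets each at distance at most $2$ (the latter amounts to checking that an auxiliary graph on $D$, with edges between vertices at distance $\le 2$, has a perfect matching). For the hardness direction, I would reduce from the \textsc{Domination Decision} problem, which is well known to be NP-complete for bipartite graphs (or alternatively from \textsc{Exact Cover by 3-Sets} or \textsc{Vertex Cover}); the natural choice here, given the structure of the parameter, is \textsc{Domination} itself.

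The key construction is a gadget attached to every vertex of the input graph that forces any semi-PD-set to behave, on the original vertices, exactly like a dominating set, while contributing a fixed, predictable number of extra vertices and supplying a ready-made semipartner for each vertex that must be selected. Concretely, given a bipartite graph $G=(X,Y,E)$ with $|V(G)|=n$, I would build a bipartite graph $G'$ by adding to each vertex $v \in V(G)$ a small pendant structure --- for instance a path or a few new vertices hung off $v$ --- chosen so that (i) dominating the new private vertices of $v$'s gadget requires putting either $v$ or a specific gadget vertex into the semi-PD-set, (ii) any such forced gadget vertices can always be semipaired among themselves or with $v$ at distance at most $2$, and (iii) bipartiteness is preserved. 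One then argues that $G$ has a dominating set of size at most $k$ if and only if $G'$ has a semi-PD-set of size at most $k + c\cdot n$ for the appropriate constant $c$ determined by the gadget. The forward direction takes a dominating set $D$ of $G$, adds the forced gadget vertices, and exhibits the semipairing; the reverse direction takes a semi-PD-set $D'$ of $G'$, shows it can be modified without increasing its size so that its intersection with $V(G)$ is a dominating set of $G$, and counts.

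The main obstacle I anticipate is designing the gadget so that the semipairing constraint is genuinely neutral --- i.e., so that selecting a vertex $v$ of the original graph never needs to "waste" a second vertex merely to provide a semipartner, and conversely so that one cannot cheat by using gadget vertices to dominate original vertices of $G$ more cheaply than a dominating set would. Because the distance-at-most-$2$ relation reaches two steps out, the gadget attached to $v$ could inadvertently let a gadget vertex semipair with, or even dominate, neighbors of $v$ in $G$, which would break the clean correspondence with domination. Handling this typically forces the gadget to be "long enough" (a path of length $3$ or $4$, as in the GP$4$ construction of Definition~\ref{defn1}) so that its internal vertices are at distance greater than $2$ from anything outside $v$'s gadget, at the cost of making the arithmetic relating $k$ to the size of the semi-PD-set more delicate. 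Getting the gadget size, the constant $c$, and the two directions of the equivalence to line up exactly is the crux of the proof; the rest is bookkeeping and a routine check that $G'$ is bipartite.
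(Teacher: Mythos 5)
Your outline gets the routine parts right (membership in NP, the general shape of a gadget reduction that forces a fixed number of gadget vertices into every solution and supplies free semipartners), but it stops exactly at the point you yourself identify as the crux: the gadget is never actually constructed, and the two directions of the equivalence are never verified. Worse, the one concrete candidate you gesture at --- a pendant path of length $3$ or $4$ ``as in the GP$4$ construction'' --- is precisely the construction that the paper's own Section~\ref{Sec:3} shows \emph{cannot} work for semipaired domination: by Theorem~\ref{ob:GP4}, attaching a $P_4$ to every vertex of an arbitrary graph $H$ yields a graph whose semipaired domination number is always exactly $\frac{2}{5}|V|$, independent of $H$. The reason is the failure mode you name but do not resolve: the pair $\{w_i,y_i\}$ inside the gadget dominates $v_i$ and the entire tail at distance-two semipairing cost zero, so no information about domination in the base graph survives. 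That gadget instead encodes \emph{paired} domination (Lemma~\ref{l:lemGP4}), which is why the paper uses it only to separate the two problems.

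The paper's actual reduction is structurally different and resolves the neutrality problem by duplication rather than by path length. It reduces from \textsc{Vertex Cover}, builds two copies $V_1\cup E_1$ and $V_2\cup E_2$ of the vertex--edge incidence bipartite graph of $G$ (so ``covering an edge'' becomes ``dominating an edge-vertex''), and attaches to each index $i$ a gadget $f_i$--$a_i$--$b_i$--$c_i$ with $f_i$ adjacent to \emph{both} copies $v_i^1$ and $v_i^2$. The gadget forces $b_i,f_i$ into every solution ($2n$ forced vertices), $f_i$ dominates both copies of $v_i$ so original vertices never need separate domination, and --- the key point --- any selected $v_i^1$ is automatically at distance two from its twin $v_i^2$ through $f_i$, so the two copies semipair with each other and the pairing constraint costs nothing. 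This is the idea your proposal is missing; without it (or an equivalent mechanism), the correspondence ``$G$ has a dominating set of size $k$ iff $G'$ has a semi-PD-set of size $k+cn$'' that you assert cannot be established, so the proposal has a genuine gap.
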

\begin{proof}
Clearly, the \textsc{Semipaired Domination Decision} problem is in NP for bipartite graphs. To show the hardness, we give a polynomial reduction from the \textsc{Minimum Vertex Cover} problem. Given a non-trivial graph $G=(V,E)$, where $V=\{v_{1},v_{2},\ldots,v_{n}\}$ and $E=\{e_{1},e_{2},\ldots,e_{m}\}$, we construct a graph $H=(V_{H},E_{H})$ in the following way:

 Let $V_{k}=\{v_{i}^{k} \mid i \in [n]\}$ and $E_{k}=\{e_{j}^{k}\mid j \in [m]\}$ for $k\in [2]$. Also assume that $A=\{a_{i}\mid i \in [n]\}$, $B=\{b_{i}\mid i \in [n]\}$, $C=\{c_{i} \mid i \in [n]\}$, and $F=\{f_{i}\mid i \in [n]\}$.

 Now define $V_{H}=V_{1}\cup V_{2}\cup E_{1}\cup E_{2} \cup A\cup B \cup C \cup F$,\\
 and $E_{H}=\{v_{i}^{1}f_{i}, v_{i}^{2}f_{i},a_{i}b_{i},b_{i}c_{i},a_{i}f_{i}\mid i \in [n]\} \cup \{v_{p}^{k}e_{i}^{k},v_{q}^{k}e_{i}^{k}\mid k\in [2], i \in [m]$, $v_{p},v_{q}$ are endpoints of edge $e_{i}$ in $G\}$. Fig.~\ref{fig:1} illustrates the construction of $H$ from $G$.

 \begin{figure}[h!]
 \begin{center}
  \includegraphics[width=12cm, height=5cm]{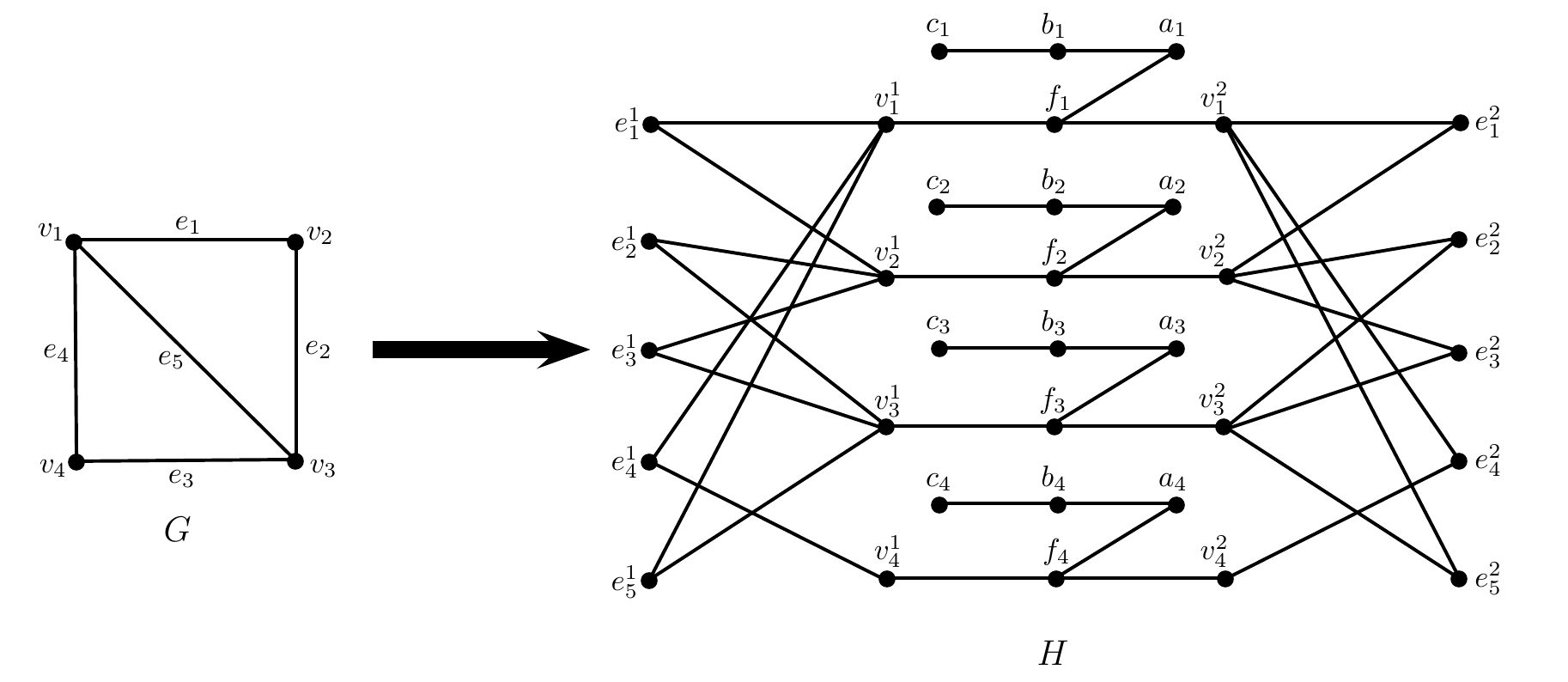}\\
 \caption{An illustration of the construction of $H$ from $G$  in the proof of Theorem~\ref{t:bipartite}.}
\label{fig:1}
\end{center}
\end{figure}

 Note that the set $I_{1}=V_{1}\cup V_{2}\cup A \cup C$ is an independent set in $H$. Also, the set $I_{2}=E_{1}\cup E_{2} \cup F \cup B$ is an independent set in $H$. Since $V_{H}=I_{1}\cup I_{2}$, the graph $H$ is a bipartite graph. Now to complete the proof, it suffices for us to prove the following claim:

\begin{claim}
\label{c:claim1}
The graph $G$ has a vertex cover of cardinality at most~$k$ if and only if the graph $H$ has a semi-PD-set of cardinality at most~$2n+2k$.
\end{claim}
\begin{proof}
Let $V_{c}=\{v_{i_1},v_{i_2},\ldots,v_{i_k}\}$ be a vertex cover of $G$ of cardinality $k$. Then $D_{p}=\{v_{i_1}^{1},v_{i_2}^{1},\ldots,v_{i_k}^{1}\}\cup \{v_{i_1}^{2},v_{i_2}^{2},\ldots,v_{i_k}^{2}\}\cup B\cup F$ is a semi-PD-set of $H$ of cardinality $2n+2k$.

Conversely, suppose that $H$ has a semi-PD-set $D$ of cardinality at most $2n+2k$. Note that $D\cap \{a_{i}.b_{i},c_{i},f_{i}\}|\geq 2$ for each $i\in [n]$. Hence, without loss of generality, we may assume that $\{b_{i},f_{i}\mid i \in [n]\} \subseteq D$, where $b_{i}$ and $f_{i}$ are semipaired. Hence $|D\cap (E_{1}\cup E_{2} \cup V_{1} \cup V_{2})|\leq 2k$. Let $S=(V_{1}\cup E_{1})\cap D$.  Without loss of generality, we may also assume that $|S|\leq k$. Now, if $e_{i}^{1}\in S$ for some $i\in [m]$, and none of its neighbors belongs to $D$, then $e_{i}^{1}$ must be semipaired with some vertex $e_{j}^{1}$ where $j\in [m]\setminus \{i\}$, and also there must exists a vertex $v_{k}^{1}$ which is a common neighbor of $e_{i}^{1}$ and $e_{j}^{1}$. In this case, we replace
the vertex $e_{i}^{1}$ in the set $S$ with the vertex $v_{k}^{1}$ and so
$S \leftarrow (S\setminus \{e_{i}^{1}\}) \cup \{v_{k}^{1}\}$ where $v_{k}^{1}$ and $e_{j}^{1}$ are semipaired. We do this for each vertex $e_{i}^{1} \in S$ where $i \in [m]$ with none of its neighbors in the set $D$. For the resulting set $S$, $|S \cap V_{1}| \leq k$ and every vertex $e_{i}^{1}$ has a neighbor in $V_{1}\cap S$. The set $V_{c}=\{v_{i} \mid v_{i}^{1}\in S\}$ is a vertex cover of $G$ of cardinality at most $k$. This completes the proof of the claim.
\end{proof}
Hence, the theorem is proved.
\end{proof}
\subsection{NP-completeness result for split graphs}
\begin{theorem}\label{t:split}
The \textsc{Semipaired Domination Decision} problem is NP-complete for split graphs.
\end{theorem}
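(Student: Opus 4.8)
The plan is to give a polynomial reduction from the \textsc{Minimum Vertex Cover} problem, as in the bipartite case, but with a construction tailored to the split structure. Since any two vertices of a clique are at distance~$1$, the semipairing constraint is automatically satisfied inside a clique, so the construction can be made considerably simpler than the one in Theorem~\ref{t:bipartite}. Membership in NP is immediate: a semi-PD-set together with a partition of it into $2$-element subsets at pairwise distance at most~$2$ is a certificate of polynomial size that is checkable in polynomial time.

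Given a graph $G=(V,E)$ with $V=\{v_1,\dots,v_n\}$ and $E=\{e_1,\dots,e_m\}$ (we may assume $E\neq\emptyset$, the edgeless case being trivial for \textsc{Vertex Cover}), I would build a split graph $H$ using two disjoint ``layers''. The clique of $H$ is $C=\{v_i^1\mid i\in[n]\}\cup\{v_i^2\mid i\in[n]\}$, with all $2n$ of these vertices pairwise adjacent, and the independent set of $H$ is $I=\{e_j^1\mid j\in[m]\}\cup\{e_j^2\mid j\in[m]\}$; for each $k\in[2]$ and each edge $e_j=v_pv_q$ of $G$ we add the edges $e_j^kv_p^k$ and $e_j^kv_q^k$. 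Thus an edge-vertex in layer~$k$ is joined only to vertex-vertices in the same layer. One checks at once that $I$ is independent, $C$ is a clique, $V(H)=C\cup I$, and $H$ is connected and isolate-free, so $H$ is a split graph, and the construction is clearly polynomial.

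The core of the argument is the identity $\gamma_{pr2}(H)=2\,\tau(G)$, where $\tau(G)$ denotes the vertex cover number of $G$. For the upper bound, starting from a minimum vertex cover $V_c$ of $G$, the set $D=\{v_i^1,v_i^2\mid v_i\in V_c\}$ dominates $C$ (it is a nonempty subset of the clique) and dominates every $e_j^k$ (a layer-$k$ copy of a covering endpoint of $e_j$ lies in $D$); since $D\subseteq C$ and $|D|=2\tau(G)$ is even, $D$ can be partitioned into adjacent pairs, so it is a semi-PD-set. For the lower bound, let $D$ be any semi-PD-set of $H$; it is a dominating set, and replacing each $e_j^k\in D$ by one of the two vertices $v_p^k,v_q^k$ yields a dominating set $D'\subseteq C$ with $|D'|\le|D|$. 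For each layer $k$, the set $\{v_i\mid v_i^k\in D'\}$ must be a vertex cover of $G$, for otherwise some $e_j^k$ would be undominated; hence $|D'|\ge 2\tau(G)$, giving $\gamma(H)\ge 2\tau(G)$, and together with the upper bound $\gamma(H)=2\tau(G)$. Therefore $|D|\ge\gamma(H)=2\tau(G)$, and since $D$ from the upper bound attains this, $\gamma_{pr2}(H)=2\tau(G)$. Consequently $\tau(G)\le k$ if and only if $\gamma_{pr2}(H)\le 2k$, which completes the reduction.

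The point that needs the most care — and the reason the one-layer incidence construction used for domination on split graphs does not transfer verbatim — is that a semi-PD-set always has even cardinality, so $\gamma_{pr2}(H)$ cannot encode an arbitrary vertex cover number without some control on parity. Duplicating the whole incidence structure into two \emph{independent} layers is exactly what forces the optimum to equal $2\tau(G)$ (rather than, say, $\tau(G)$ rounded up to an even value), and it simultaneously makes the two domination requirements decouple into two independent vertex-cover conditions. The remaining ingredients — that a dominating set may always be pushed into $C$, and the routine verification that $H$ is split and that the reduction is polynomial — are straightforward.
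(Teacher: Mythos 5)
Your reduction is correct, and it differs from the paper's in its choice of source problem and gadget. The paper reduces from the \textsc{Domination Decision} problem: it builds a split graph on $V_1\cup V_2\cup U_1\cup U_2$ in which $V_1\cup U_1$ is a clique and each independent-set vertex $v_i^2$ (resp.\ $u_i^2$) is joined to the clique copies of the vertices in $N_G[v_i]$, proving that $G$ has a dominating set of size $k$ iff the split graph has a semi-PD-set of size $2k$. You instead reduce from \textsc{Vertex Cover} via a doubled incidence graph, with clique $\{v_i^1,v_i^2\}$ and independent edge-vertices $e_j^k$ attached to the layer-$k$ copies of their endpoints, establishing $\gamma_{pr2}(H)=2\tau(G)$. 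The two constructions share the essential mechanism --- two disjoint layers to absorb the even-cardinality/pairing constraint, with all selector vertices placed in one clique so that semipairing is free --- and your verification of both inequalities (pushing edge-vertices of a dominating set into the clique, then reading off a vertex cover per layer) is sound. What your version buys is uniformity with the paper's bipartite reduction (both then start from \textsc{Vertex Cover}) and a clean closed-form identity $\gamma_{pr2}(H)=2\tau(G)$; what the paper's version buys is a slightly smaller gadget ($4n$ vertices independent of $m$) and a direct link to domination, which is the more natural parent problem here. Both are complete proofs of the theorem.
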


\begin{proof}
Clearly, the \textsc{Semipaired Domination Decision} problem is in NP. To show the hardness, we give a polynomial time reduction from the \textsc{Domination Decision} problem, which is well known NP-complete problem. Given a non-trivial graph $G=(V,E)$, where $V=\{v_{i}\mid i\in [n]\}$ and $E=\{e_{j}\mid j\in [m]\}$, we construct a split graph $G'=(V_{G'},E_{G'})$ as follows:

Let $V_{k}=\{v_{i}^{k} \mid i \in [n]\}$ and $U_{k}=\{u_{i}^{k}\mid i \in [n]\}$ for $k\in [2]$.
Now define $V_{G'}=V_{1}\cup V_{2}\cup U_{1}\cup U_{2}$, and $E_{G'}=\{uv \mid u,v \in V_1 \cup U_1, u\neq v\} \cup \{v_i^2v_j^1,u_i^2u_j^1 \mid i \in [n] $ and $v_j \in N_G[v_i]\}$. Note that the set $A=V_1 \cup U_1$ is a clique in $G'$ and the set $B=V_2 \cup U_2$ is an independent set in $G'$. Since $V_{G'}=A \cup B$, the constructed graph $G'$ is a split graph. Fig.~\ref{fig:1} illustrates the construction of $G'$ from $G$.

 \begin{figure}[h!]
 \begin{center}
 \includegraphics[width=8cm, height=4cm]{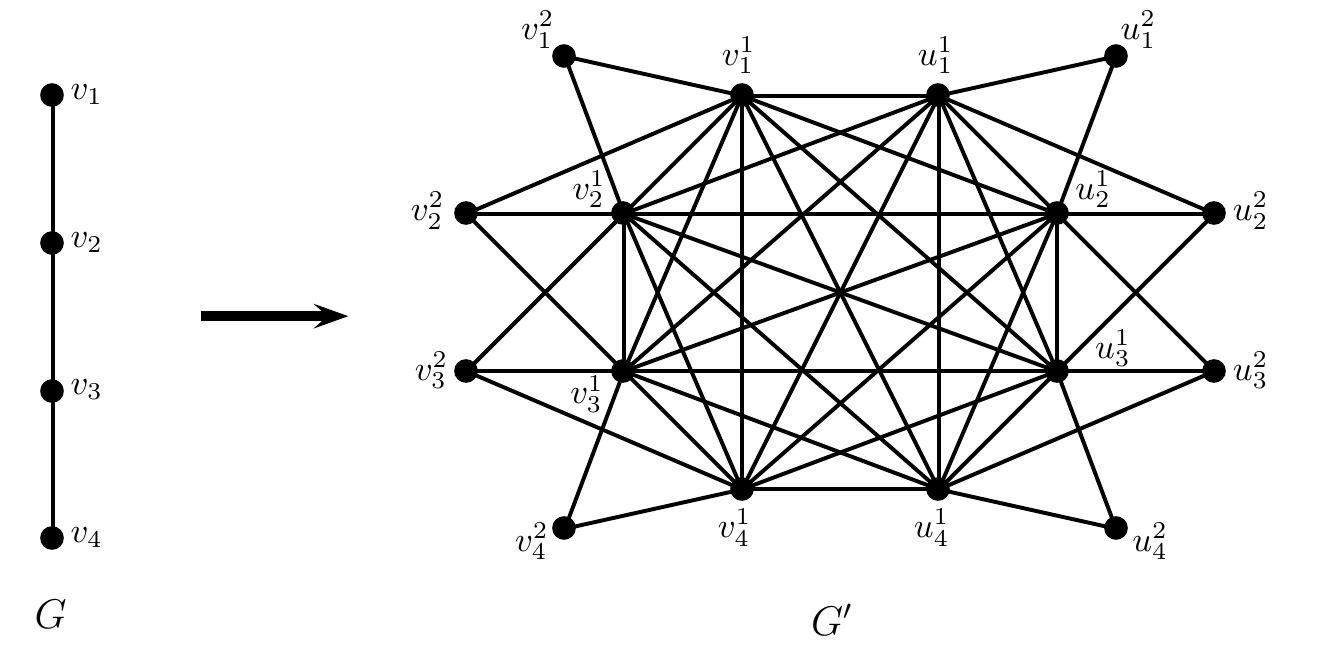}\\
 \caption{An illustration to the construction of $G'$ from $G$  in the proof of Theorem~\ref{t:split}.}
\label{fig:1}
\end{center}
\end{figure}

Now, to complete the proof of the theorem, we only need to prove the following claim.

 \begin{claim} $G$ has a dominating set of cardinality $k$ if and only if $G'$ has a semi-PD-set of size cardinality $2k$.
 \end{claim}
\begin{proof}
Let $D=\{v_{i_1},v_{i_2},\ldots,v_{i_k}\}$ be a dominating set of size atmost $k$ of $G$. Then $D_{sp}=\{v_{i_1}^{1},v_{i_2}^{1},\ldots,v_{i_k}^{1}\}\cup \{u_{i_1}^{1},u_{i_2}^{1},\ldots,u_{i_k}^{1}\}$ is a semi-PD-set of $G'$ of size atmost $2k$.

Conversely, suppose that $G$ has a semi-PD-set $D_{sp}$ of cardinality at most $2k$. Let $S_1=(V_1 \cup V_2) \cap D_{sp}$ and $S_2= U_1\cup U_2 \cap D_{sp}$. Then either $|S_1| \leq k$ or $|S_2| \leq k$. Without loss of generality, let us assume that $|S_1| \leq k$. Note that if $v_i^2 \in S_1$ and none of neighbors belong to $S_1$ then we replace $v_i^2$ by some of its neighbor $v_j^1$ in the set $S_1$. So, we may assume that $S_1 \cap V_2 = \phi$. Now the set $D= \{v_i \mid v_i^1 \in S_1\}$ is a dominating set of $G$ of size atmost $k$. Hence, the result follows.
\end{proof}
Hence, the theorem is proved.
\end{proof}


\section{Algorithm for Interval Graphs}
\label{Sec:5}

In this section, we present a linear-time algorithm to compute a minimum cardinality semi-PD-set of an interval graph.

A linear time recognition algorithm exists for interval graphs, and for an interval graph an interval family can also be constructed in linear time~\cite{booth,golumbic}. Let $G=(V,E)$ be an interval graph and $I$ be its interval model. For a vertex $v_{i}\in V$, let $I_{i}$ be the corresponding interval. Let $a_{i}$ and $b_{i}$ denote the left and right end points of the interval $I_{i}$.  Without loss of generality, we may assume that no two intervals share a common end point. Let $\alpha=(v_{1},v_{2},\ldots,v_{n})$ be the \emph{left end ordering} of vertices of $G$, that is,  $a_{i}<a_{j}$ whenever $i<j$. Now we first prove the following lemmas.

\begin{lemma}
Let $\alpha=(v_{1},v_{2},\ldots,v_{n})$ be the left end ordering of vertices of $G$. If $v_{i}v_{j}\in E$ for $i<j$, then $v_{i}v_{k}\in E$ for every $i<k<j$.
\end{lemma}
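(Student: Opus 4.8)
The plan is to work directly with the interval model $I$ and the left end ordering $\alpha$. First I would record the elementary consequence of the left end ordering: since $i<j$ we have $a_i<a_j$, so any edge $v_iv_j$ with $i<j$ forces the intervals $I_i=[a_i,b_i]$ and $I_j=[a_j,b_j]$ to overlap in a specific way. Indeed, because $a_i<a_j$ and $I_i\cap I_j\neq\emptyset$, the left endpoint of $I_j$ must fall inside $I_i$; otherwise, if $a_j>b_i$, the interval $I_j$ would lie entirely to the right of $I_i$ and the two could not intersect, contradicting $v_iv_j\in E$. Hence $a_i<a_j\le b_i$ (with strict inequality under the standing assumption that no two intervals share an endpoint).

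Next, fix an arbitrary $k$ with $i<k<j$. From the left end ordering we obtain $a_i<a_k<a_j$, and combining this with $a_j\le b_i$ from the first step gives $a_i<a_k<b_i$, so $a_k\in I_i$. Since trivially $a_k\in I_k$, the point $a_k$ lies in $I_i\cap I_k$, so $I_i\cap I_k\neq\emptyset$ and therefore $v_iv_k\in E$. As $k$ was an arbitrary index strictly between $i$ and $j$, this proves the lemma.

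The argument is entirely elementary, so there is no genuine obstacle; the only step that deserves an explicit sentence is the justification of $a_j\le b_i$ from the adjacency of $v_i$ and $v_j$ together with $a_i<a_j$, and the rest is a direct containment of endpoints. I would present it in the two short paragraphs above, phrasing everything in terms of the interval endpoints $a_i,b_i$ and the defining property that two vertices are adjacent exactly when their intervals intersect.
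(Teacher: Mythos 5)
Your proof is correct and is exactly the argument the paper has in mind; the paper simply states that the claim ``follows directly from the left end ordering,'' and your write-up supplies the standard details (from $v_iv_j\in E$ and $a_i<a_j$ deduce $a_j\le b_i$, then for $i<k<j$ conclude $a_i<a_k<a_j\le b_i$, so $a_k\in I_i\cap I_k$). No issues.
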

\begin{proof}
The proof directly follows from the left end ordering of vertices of $G$.
\end{proof}

Define the set $V_{i}=\{v_{1},v_{2},\ldots,v_{i}\}$, for each  $ i\in [n]$.

\begin{lemma}
If $G$ is a connected interval graph, then $G[V_{i}]$ is also connected.
\end{lemma}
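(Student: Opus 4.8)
The plan is to argue by induction on $i$ (or equivalently to show that adding $v_i$ to the already-connected $G[V_{i-1}]$ keeps things connected, with the base case $G[V_1]$ trivially connected). Concretely, I would fix $i \geq 2$, assume $G[V_{i-1}]$ is connected, and show $v_i$ has a neighbor in $V_{i-1}$; since attaching a vertex with at least one edge to a connected graph yields a connected graph, this finishes the induction. So the whole argument reduces to the following claim: for every $i \geq 2$, the vertex $v_i$ has a neighbor among $v_1,\ldots,v_{i-1}$.

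To prove that claim, I would use the left end ordering together with connectedness of $G$. Since $G$ is connected and has at least two vertices, $v_i$ is not isolated, so it has some neighbor $v_j$ in $G$. If $j < i$ we are done, so suppose every neighbor of $v_i$ has index larger than $i$. Consider the interval $I_i = [a_i, b_i]$. Each neighbor $v_j$ of $v_i$ satisfies $j > i$, hence $a_j > a_i$; since $I_j \cap I_i \neq \emptyset$ this forces $a_j < b_i$, i.e.\ $a_j \in (a_i, b_i)$. Now I claim the set $W = V_i = \{v_1,\ldots,v_i\}$ is "closed under neighbors": any vertex adjacent to a vertex of $W$ either lies in $W$ or has its left endpoint inside one of the intervals $I_1,\ldots,I_i$ — more simply, I would argue that $V \setminus V_i$ and $V_i$ with the extra assumption would produce a component of $G$ not meeting $V_i \setminus \{v_i\}$... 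Let me instead take the cleanest route: pick the vertex $v_\ell \in V_{i-1}$ whose interval has the largest right endpoint $b_\ell$ among $v_1, \dots, v_{i-1}$.

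The key step — and the main obstacle to state carefully — is showing $v_i$ is adjacent to this $v_\ell$, equivalently that $a_i < b_\ell$ (since $a_\ell < a_i$ automatically). Suppose not, so $a_i > b_\ell$, meaning $b_j < a_i$ for \emph{every} $j \leq i-1$ (by the maximality of $b_\ell$). Then no vertex of $V_{i-1}$ is adjacent to $v_i$, and moreover no vertex of $V_{i-1}$ is adjacent to any $v_k$ with $k \geq i$ either: indeed if $v_j v_k \in E$ with $j \leq i-1 < i \leq k$, then since $a_k > a_j$ we need $a_k < b_j < a_i \leq a_k$, a contradiction. Hence $V_{i-1}$ has no edges to $V \setminus V_{i-1}$, so $V_{i-1}$ is a union of connected components of $G$ not containing $v_i$, contradicting that $G$ is connected (and has $\geq 2$ vertices, so $V_{i-1} \neq \emptyset$ and $V \setminus V_{i-1} \neq \emptyset$). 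Therefore $v_i$ is adjacent to $v_\ell \in V_{i-1}$, completing the induction. The only subtlety to handle with care is the base/degenerate cases (that $V_{i-1}$ is nonempty for $i \geq 2$ and that $G[V_1]$, a single vertex, counts as connected), which are immediate from the standing assumption that $G$ has at least two vertices.
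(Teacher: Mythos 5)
Your proof is correct and follows the same route the paper indicates: induction on $i$, where the inductive step amounts to showing that $v_i$ has a neighbour in $V_{i-1}$ (via the vertex of $V_{i-1}$ with maximum right endpoint, using connectedness of $G$ to rule out $a_i > b_\ell$). The paper merely asserts that "the proof can easily be done using induction on $i$," so your write-up supplies exactly the details it omits; the only cosmetic issue is the abandoned first attempt in the middle of your argument, which should be deleted in a final version.
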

\begin{proof}
The proof can easily be done using induction on $i$.
\end{proof}

Let $F(v_{i})$ be the least index vertex adjacent to $v_{i}$, that is, if $F(v_{i})=v_{p}$, then $p=\min\{k \mid v_{k}v_{i}\in E\}$. In particular, we define  $F(v_{1})=v_{1}$. Let $L(v_{i})=v_{q}$, where $q=\max \{k \mid v_{k}v_{i}\notin E$ and $k<i\}$. In particular, if $L(v_{i})$ does not exist, we assume that $L(v_{i})=v_{0}$ $(v_{0}\notin V)$. Let $G_{i}=G[V_{i}]$ and $D_{i}$ denote a semi-PD-set of $G_{i}$ of minimum cardinality. Recall that we only consider connected graphs with at least two vertices.

\begin{lemma}\label{lem3}
For $i\geq 2$, if $F(v_{i})=v_{1}$, then $D_{i}=\{v_{1},v_{i}\}$.
\end{lemma}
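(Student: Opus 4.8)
The plan is to show that $\{v_1, v_i\}$ is a semi-PD-set of $G_i$ of cardinality~$2$, and that no semi-PD-set of $G_i$ can have cardinality smaller than~$2$; together these give $D_i = \{v_1, v_i\}$.

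First I would establish the lower bound. Since $i \geq 2$ and $G_i$ is connected (by the preceding lemma on connectivity of $G[V_i]$), $G_i$ is an isolate-free graph with at least two vertices, so any dominating set of $G_i$ is nonempty. By definition a semi-PD-set is partitioned into $2$-element subsets, hence has even cardinality, and being nonempty it must contain at least~$2$ vertices. Thus $|D_i| \geq 2$.

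Next I would verify that $\{v_1, v_i\}$ is in fact a semi-PD-set of $G_i$. The pairing condition is immediate: $\{v_1, v_i\}$ consists of a single $2$-element block, and since $F(v_i) = v_1$ we have $v_1 v_i \in E$, so $d_{G_i}(v_1, v_i) = 1 \leq 2$, and the two vertices are semipaired. For the domination condition, let $v_k$ be any vertex of $G_i$ with $1 < k < i$. Applying Lemma~5.1 to the edge $v_1 v_i \in E$ (with $1 < i$), we get $v_1 v_k \in E$ for every $k$ with $1 < k < i$. Hence every vertex of $V_i \setminus \{v_1, v_i\}$ is adjacent to $v_1$, so $\{v_1, v_i\}$ dominates $G_i$.

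Combining the two parts, $\{v_1, v_i\}$ is a semi-PD-set of $G_i$ of cardinality~$2$, which by the lower bound is minimum; therefore $D_i = \{v_1, v_i\}$. There is no real obstacle here: the entire argument rests on Lemma~5.1, which forces $v_1$ to dominate all of $v_2, \ldots, v_{i-1}$ once $v_1 v_i$ is an edge. The only point requiring a word of care is the lower bound $|D_i| \geq 2$, which uses that $G_i$ is isolate-free (guaranteed by connectivity of $G[V_i]$ and $i \geq 2$) so that a semi-PD-set cannot be empty.
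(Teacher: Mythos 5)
Your proposal is correct and follows essentially the same route as the paper: it verifies that $\{v_1,v_i\}$ dominates $G_i$ (via the consecutive-adjacency property of the left end ordering) and that $v_1$ and $v_i$ are at distance~$1$, hence can be semipaired. The paper states this in two lines, leaving the lower bound $|D_i|\geq 2$ and the appeal to the ordering lemma implicit; you have merely made those steps explicit.
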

\begin{proof}
Note that every vertex in $G_{i}$ is dominated by $v_{1}$, and $d_{G_{i}}(v_{1},v_{i})=1$. Hence, $D_{i}=\{v_{1},v_{i}\}$.
\end{proof}

\begin{lemma}\label{lem4}
For $i>1$, if $F(v_{i})=v_{j}$, $j>1$ and $F(v_{j})=v_{1}$, then $D_{i}=\{v_{1},v_{j}\}$.
\end{lemma}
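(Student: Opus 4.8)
\textbf{Proof proposal for Lemma~\ref{lem4}.}

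The plan is to argue, as in Lemma~\ref{lem3}, that the two-element set $\{v_1,v_j\}$ already dominates all of $G_i$ and that $v_1$ and $v_j$ are at distance at most~$2$, so it is a semi-PD-set; since any semi-PD-set has even cardinality at least~$2$, this forces $D_i=\{v_1,v_j\}$. First I would record the distance condition: by hypothesis $F(v_j)=v_1$, so $v_1v_j\in E$ and hence $d_{G_i}(v_1,v_j)=1\le 2$, which means $v_1$ and $v_j$ can be semipaired. So the only thing to verify is that $\{v_1,v_j\}$ is a dominating set of $G_i$.

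To see domination, take any vertex $v_\ell\in V_i$ with $\ell\notin\{1,j\}$ and suppose for contradiction that $v_\ell$ is adjacent to neither $v_1$ nor $v_j$. Since $F(v_i)=v_j$ with $j>1$, vertex $v_i$ is not adjacent to any $v_k$ with $k<j$; in particular $v_i$ is not adjacent to $v_1$, so $\ell\ne i$ does not immediately follow, but in any case the interesting vertices are those with $1<\ell<i$. The key point is the left-end ordering together with the first lemma of this section: if $v_\ell$ is not adjacent to $v_1$, then (using that $G_i$ is connected and reasoning via the interval model) the interval $I_\ell$ lies entirely to the right of $b_1$, so its left endpoint satisfies $a_\ell>b_1$. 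Now I would use the contrapositive of the ``interval-to-the-right'' structure: since $v_\ell$ is also not adjacent to $v_j$ and $\ell$ can be taken with $\ell<i$, the interval $I_\ell$ must satisfy $b_\ell<a_j$ or $a_\ell>b_j$; in either subcase one derives that $v_i$, whose interval meets $I_j$ (as $v_iv_j\in E$) and whose left endpoint $a_i$ is the largest among $v_1,\dots,v_i$, would have to be adjacent to some vertex with index smaller than $j$ — contradicting $F(v_i)=v_j$ — or that $v_\ell$ is isolated from the ``left part'' contradicting connectivity of $G_i$. I expect the cleanest way to phrase this is: every vertex of $G_i$ is within distance~$1$ of $v_1$ or within distance~$1$ of $v_j$, because the vertices not dominated by $v_1$ are exactly those whose intervals start after $b_1$, and among those $v_j$ is (essentially) a vertex with the smallest left endpoint that $v_i$ still reaches, so $N[v_1]\cup N[v_j]\supseteq V_i$.

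Finally, having shown $\{v_1,v_j\}$ is a semi-PD-set of $G_i$ of cardinality~$2$ and recalling that a semi-PD-set must have even cardinality and $G_i$ has at least two vertices so $\gamma_{pr2}(G_i)\ge 2$, we conclude $D_i=\{v_1,v_j\}$. The main obstacle is the domination step: making precise, purely from the functions $F$ and the left-end ordering (without re-deriving everything from the interval model), why no vertex of $V_i$ can escape both $N[v_1]$ and $N[v_j]$. I would handle this by first proving a small structural fact — that $V_i\setminus N[v_1]$ is a set of vertices all of whose intervals lie to the right of $b_1$, and that any such vertex which is also non-adjacent to $v_j$ forces the existence of an edge from $v_i$ to a vertex indexed below $j$, contradicting the definition of $F(v_i)=v_j$.
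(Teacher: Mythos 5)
Your overall strategy is exactly the paper's: exhibit $\{v_1,v_j\}$ as a semi-PD-set of $G_i$ of cardinality~$2$ (noting $v_1v_j\in E$ since $F(v_j)=v_1$) and conclude by minimality. The paper simply asserts the domination claim in one line, so on that level you match it. However, the step you yourself flag as ``the main obstacle'' --- that no vertex of $V_i$ escapes $N[v_1]\cup N[v_j]$ --- is left genuinely unfinished in your write-up: the passage about intervals lying to the right of $b_1$, the ``(essentially) smallest left endpoint that $v_i$ still reaches,'' and the disjunction ``one derives \dots\ or that $v_\ell$ is isolated'' is a sketch of a case analysis, not a proof, and it re-derives interval geometry that is not needed.

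The gap closes in two lines using the first lemma of this section (if $v_pv_q\in E$ with $p<q$, then $v_pv_k\in E$ for every $p<k<q$). Since $v_1v_j\in E$, that lemma gives $v_1v_k\in E$ for all $1<k<j$, so $v_1$ dominates $\{v_1,\dots,v_j\}$. Since $F(v_i)=v_j$ we have $v_jv_i\in E$, so the same lemma gives $v_jv_k\in E$ for all $j<k<i$, and $v_j$ dominates $\{v_j,\dots,v_i\}$. Hence $N_{G_i}[v_1]\cup N_{G_i}[v_j]=V_i$, and with $d_{G_i}(v_1,v_j)=1$ the set $\{v_1,v_j\}$ is a semi-PD-set; as any semi-PD-set of a graph on at least two vertices has cardinality at least~$2$, $D_i=\{v_1,v_j\}$. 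I recommend replacing your interval-model digression with this argument; everything else in your proposal is fine.
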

\begin{proof}
Note that every vertex in $G_{i}$ is dominated by some vertex in the set $\{v_{1},v_{j}\}$, and $d_{G_{i}}(v_{1},v_{i})=1$. Hence, $D_{i}=\{v_{1},v_{i}\}$.
\end{proof}

\begin{lemma}
\label{lem5}
For $r<k<j<i$, let $F(v_{i})=v_{j}$, $F(v_{j})=v_{k}$ $F(v_{k})=v_{r}$. If every vertex $v_{l}$ where $k<l<j$, is adjacent to at least one vertex in the set $\{v_{j},v_{r}\}$, then the following holds:\\
(a) $\{v_{j},v_{r}\}\subseteq D_{i}$.\\
(b) $v_{j}$ is semipaired with $v_{r}$ in $D_{i}$.\\
(c) $D_{i}\cap \{v_{s+1},\ldots,v_{r},v_{r+1},\ldots,v_{i}\}=\{v_{j},v_{r}\}$.
\end{lemma}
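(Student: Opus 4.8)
The plan is to exploit the structure forced by the chain $F(v_i)=v_j,\ F(v_j)=v_k,\ F(v_k)=v_r$ together with the left-end ordering, and to argue that no semi-PD-set of $G_i$ can afford to dominate the ``late'' vertices $v_{r+1},\dots,v_i$ with fewer than the two vertices $v_j,v_r$, and that these two must be the ones used. I would work with a fixed minimum semi-PD-set $D_i$ and modify it without increasing its cardinality until it has the claimed form, which simultaneously establishes (a), (b) and (c).

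First I would set up the neighbourhood facts I need from the previous lemma (the one immediately above) and the left-end ordering: since $F(v_i)=v_j$, no vertex with index $<j$ (other than nothing — wait, $F(v_i)=v_j$ means $v_j$ is the \emph{least} neighbour of $v_i$) is adjacent to $v_i$, so $v_i$ must be dominated either by itself or by some $v_\ell$ with $j\le \ell\le i$; and the interval-graph lemma (if $v_av_b\in E$ with $a<b$ then $v_av_c\in E$ for all $a<c<b$) tells me that $N[v_i]\cap V_i=\{v_j,v_{j+1},\dots,v_i\}$, and similarly $N[v_j]\cap V_i$ starts at $v_k$ and $N[v_k]\cap V_i$ starts at $v_r$. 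The hypothesis that every $v_\ell$ with $k<\ell<j$ is adjacent to a vertex of $\{v_j,v_r\}$ is exactly what makes $\{v_j,v_r\}$ able to dominate the whole ``middle'' block, and the earlier lemmas should be invoked to handle the possibility that the chain bottoms out sooner (i.e. the cases $F(v_j)=v_1$ or $F(v_k)=v_1$ are excluded here because we have a genuine length-three descent). I would phrase a sub-claim: the set $\{v_j,v_r\}$ dominates $\{v_{s},\dots,v_i\}$ for an appropriate threshold $s$ (presumably $s=$ the analogous ``$F$'' of $v_r$, or $v_0$), and no single vertex of $V_i$ dominates all of $v_k,\dots,v_i$.

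The core of the argument is the cardinality/exchange step. Restrict attention to $D_i$'s vertices with index in $\{r,r+1,\dots,i\}$ (and possibly a few below $r$); call this $D_i'$. To dominate $v_i$ we need a vertex in $\{v_j,\dots,v_i\}\cap D_i$, and to dominate $v_j$ we need a vertex in $\{v_k,\dots,v_i\}\cap D_i$ — and crucially, by the left-end ordering a vertex adjacent to $v_j$ need not be adjacent to $v_i$ and vice versa, but I would show that no \emph{single} vertex dominates both $v_k$ and $v_i$ (because $N[v_i]$ starts at $v_j>v_k$, so any common dominator would lie in $\{v_j,\dots,v_i\}$, hence would not be adjacent to $v_k$; and $v_k$ itself is not adjacent to $v_i$). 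Hence $|D_i'|\ge 2$. Then a minimum semi-PD-set restricted here has exactly two vertices, and I would argue by exchange that we may take them to be $v_j$ and $v_r$: whatever two vertices $x,y$ appear, one of them dominates $v_i$ and so lies in $\{v_j,\dots,v_i\}$ — slide it down to $v_j$ (this only enlarges its closed neighbourhood within $V_i$, by the interval lemma, so domination is preserved); the other must then dominate $v_k$ and hence lies in $\{v_r,\dots,v_k\}$ with $N[\cdot]$ reaching at least up to $v_k$ — slide it down to $v_r$. Finally $d_{G_i}(v_j,v_r)\le 2$: both are adjacent to $v_k$ (indeed $v_jv_k\in E$ since $F(v_j)=v_k$, and $v_kv_r\in E$ since $F(v_k)=v_r$), so $v_j{-}v_k{-}v_r$ is a path of length $2$, giving (b); and since $\{v_j,v_r\}$ already dominates everything from $v_{s+1}$ up to $v_i$, the minimality of $D_i$ forces $D_i\cap\{v_{s+1},\dots,v_i\}=\{v_j,v_r\}$, which is (c), with (a) immediate.

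The main obstacle I anticipate is the exchange/sliding argument in the presence of the \emph{semipairing} constraint: replacing $x$ by $v_j$ and $y$ by $v_r$ must not break the partition of $D_i$ into distance-$\le 2$ pairs. The clean way around this is to do the surgery \emph{as a pair}: show the two ``low'' vertices $x,y$ can be assumed to be semipaired \emph{with each other} (if $x$ is paired with some $z$ of smaller index, then $z$ lies even lower and one can re-argue; this is where one may need to be a little careful, possibly peeling off and recursing on $G_s$), and then since $v_j$ and $v_r$ are themselves at distance $\le 2$ we may simply semipair $v_j$ with $v_r$, leaving the rest of $D_i$'s pairing untouched. Handling the interaction with vertices just below index $r$ — making sure we have not accidentally double-counted or left a vertex of the middle block undominated after the swap — is the fiddly part, and I would lean on the hypothesis ``every $v_\ell$, $k<\ell<j$, is adjacent to $\{v_j,v_r\}$'' together with the fact that vertices below $v_r$ are dominated by $v_k$'s predecessors in exactly the way the earlier lemmas (Lemmas~\ref{lem3} and~\ref{lem4}) describe.
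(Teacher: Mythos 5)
Your high-level plan (fix a minimum semi-PD-set and massage it by exchanges) matches the paper's, but the central step of your argument rests on a false claim. You assert that no single vertex dominates both $v_k$ and $v_i$, ``because any common dominator would lie in $\{v_j,\dots,v_i\}$, hence would not be adjacent to $v_k$.'' But $v_j$ itself is adjacent to $v_k$ --- that is exactly what $F(v_j)=v_k$ says --- so $v_j$ dominates both $v_k$ and $v_i$. Consequently your derivation that a \emph{second} vertex of $D_i$ must dominate $v_k$, hence must lie in $\{v_r,\dots,v_k\}$ and can be slid down to $v_r$, does not get off the ground: once the dominator of $v_i$ has been slid to $v_j$, the vertex $v_k$ is already dominated and nothing in your argument forces $v_r$ (or any second vertex in the window) into $D_i$. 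The actual mechanism in the paper is different: after forcing $v_j\in D_i$, one looks at the vertex $v_{j1}$ \emph{semipaired} with $v_j$, shows $j1\geq r$ and $N_{G}[v_{j1}]\subseteq N_{G}[v_j]\cup N_{G}[v_r]$, and swaps $v_{j1}$ for $v_r$. So $v_r$ is forced by the pairing constraint plus a neighbourhood-containment exchange, not by the need to dominate $v_k$. Relatedly, your plan to ``assume the two low vertices are semipaired with each other'' understates the difficulty: in the paper's part (b) the partner $v_{r1}$ of $v_r$ may be left without a mate after the swap, and one must check whether $N_{G_i}(v_{r1})\subseteq D_i$ (yielding a contradiction with minimality) or else introduce a new vertex $v_{r2}$ to re-pair $v_{r1}$; this case split cannot be skipped.

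Part (c) is also not ``immediate from minimality'' as you claim. A minimum $D_i$ could a priori contain a third vertex $v_p$ with $s+1\leq p\leq i$ because $v_p$ helps dominate vertices of index at most $s$, or because $v_p$ is the semipair partner of such a vertex; the fact that $\{v_j,v_r\}$ dominates $\{v_{s+1},\dots,v_i\}$ does not rule this out. The paper handles this with a dedicated exchange argument (two cases according to whether the partner $v_{p1}$ of $v_p$ has index above or below $s$), replacing $v_p$ (and possibly $v_{p1}$) by $v_s$ and/or a neighbour of $v_s$, and deriving a contradiction with the minimality of $|D_i\cap\{v_{s+1},\dots,v_i\}|$. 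You would need to supply an argument of this kind; as written, your proof of (c) is a gap, and your proof of (a)--(b) hinges on an adjacency statement that is false under the lemma's own hypotheses.
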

\begin{proof} (a) To dominate $v_{i}$, either $v_{i}\in D_{i}$ or $v_{i1}\in D_{i}$, where $j\leq i1 <i$ and $v_{i1}\in N_{G_{i}}(v_{i})$. If $i1\neq j$ and $v_{i1}$ is semipaired with some vertex $v_{j1}$, then $N_{G_{i}}(v_{i1})\subseteq N_{G_{i}}(v_{j})$, and $d_{G_{i}}(v_{j},v_{j1})\leq 2$. Hence, we can update the set $D_{i}$ as $D_{i}=(D_{i}\setminus \{v_{i1}\})\cup \{v_{j}\}$ and semipair $v_{j}$ with $v_{j1}$. This proves that $v_{j}\in D_{i}$.

If $v_{r}$ also belongs to $D_{i}$, then we are done. Otherwise, if $v_{j}$ is semipaired with $v_{j1}$ (where $j1\neq r$), then $j1>r$. Also, $N_{G}[v_{j1}]\subseteq N_{G}[v_{j}]\cup N_{G}[v_{r}]$. In that case, we can update the set $D_{i}$ as $D_{i}=(D_{i}\setminus \{v_{j1}\})\cup \{v_{r}\}$. Hence, $\{v_{j},v_{r}\}\subseteq D_{i}$.\\

\noindent(b) Suppose  $\{v_{j},v_{r}\}\subseteq D_{i}$. If $v_{j}$ is semipaired with $v_{r}$ in $D_{i}$, then we are done. Otherwise, if $v_{j}$ is not semipaired with $v_{r}$, assume that $v_{j}$ is semipaired with $v_{j1}$ and $v_{r}$ is semipaired with $v_{r1}$. Note that $j1$ must be greater than $r$, and $N_{G_{i}}[v_{j1}]\subseteq N_{G_{i}}[v_{j}]\cup N_{G_{i}}[v_{r}]$. Therefore, the set $D_{i}\setminus \{v_{j1}\}$ also dominates all the vertices of $G_{i}$.

Suppose that $N_{G_{i}}(v_{r1})\subseteq D_{i}$. In this case, $D'=D_{i}\setminus \{v_{j1},v_{r1}\}$ is a semi-PD-set of $G_{i}$ where $v_{j}$ and $v_{r}$ are semipaired. This contradicts the fact that $D_{i}$ is a semi-PD-set of $G_{i}$ of minimum cardinality. Hence, $N_{G_{i}}(v_{r1})\nsubseteq D_{i}$.

Let $v_{r2}\in D_{i}\setminus N_{G_{i}}(v_{r1})$. Now update the set $D_{i}$ as follows: remove $v_{j1}$ from $D_{i}$, add $v_{r2}$ in the set $D_{i}$, semipair $v_{j}$ with $v_{r}$ and $v_{r1}$ with $v_{r2}$. Clearly, the updated set is also a semi-PD-set of $G_{i}$ of minimum cardinality. This proves that there always exists a semi-PD-set $D_{i}$ of $G_{i}$ such that $\{v_{j},v_{r}\}\subseteq D_{i}$, and $v_{j}$ is semipaired with $v_{r}$ in $D_{i}$.\\

\noindent(c) We know that $\{v_{j},v_{r}\} \subseteq D_{i} \cap \{v_{s+1},\ldots,v_{r},v_{r+1},\ldots,v_{i}\}$. We need to show that $D_{i} \cap \{v_{s+1},\ldots,v_{r},\\v_{r+1},\ldots,v_{i}\} = \{v_{j},v_{r}\}$, that is, there is no other vertex from the set $\{v_{s+1},\ldots,v_{r},v_{r+1},\ldots,v_{i}\}$ belongs to $D_{i}$. Suppose, to the contrary, that there does not exist any $D_{i}$ for which $D_{i} \cap \{v_{s+1},\ldots,v_{r},\\v_{r+1},\ldots,v_{i}\}=\{v_{j},v_{r}\}$. So, for each $D_{i}$, $|D_{i} \cap \{v_{s+1},\ldots,v_{r},v_{r+1},\ldots,v_{i}\}|\geq 3$. Consider a set $D_{i}$ for which $|D_{i} \cap \{v_{s+1},\ldots,v_{r},v_{r+1},\ldots,v_{i}\}|$ is minimum.

Let $|D_{i} \cap \{v_{s+1},\ldots,v_{r},v_{r+1},\ldots,v_{i}\}|=l$. Also, assume that $v_{p}\in D_{i}$, where $p\neq j,r$ and $s+1\leq p\leq i$. Also, assume that $v_{p}$ is semipaired with $v_{p1}$ in $D_{i}$. Now consider the following two cases.

\noindent\emph{Case~1. $p1>s$.} If $v_{s}\in D_{i}$, then if, some vertex of the set $\{v_{1},v_{2},\ldots,v_{s}\}$ is dominated by $v_{p}$ or $v_{p1}$, then that vertex is also dominated by $v_{s}$. In that case, $D_{i}\setminus \{v_{p},v_{p1}\}$ is also a semi-PD-set of $G_{i}$, which is a contradiction. If $v_{s} \notin D_{i}$ and $N_{G_{i}}(v_{s})\subseteq D_{i}$, then also $D_{i}\setminus \{v_{p},v_{p1}\}$ is a semi-PD-set of $G_{i}$, which is again a contradiction. Hence, $v_{s} \notin D_{i}$ and $N_{G_{s}}(v_{s})\nsubseteq D_{i}$. Suppose $v_{q}\in N_{G_{s}}(v_{s})\cap D_{i}$. Then, update the set $D_{i}$ as $D_{i}=(D_{i}\setminus \{v_{p},v_{p1}\})\cup \{v_{s},v_{q}\}$. Note that $D_{i}$ is still a semi-PD-set of $G_{i}$ of minimum cardinality, and $|D_{i} \cap \{v_{s+1},\ldots,v_{r},v_{r+1},\ldots,v_{i}\}| < l$, a contradiction.

\noindent\emph{Case~2. $p1\leq s$.} If $v_{s}\notin D_{i}$, then the updated set $D_{i}=(D_{i}\setminus \{v_{p}\})\cup \{v_{s}\}$ is also a semi-PD-set of $G_{i}$ of minimum cardinality. If $v_{s}\in D_{i}$ and $N_{G_{s}}(v_{p1})\subseteq D_{i}$, then the updated set $D_{i}=D_{i}\setminus \{v_{p},v_{p1}\}$ is also a semi-PD-set of $G_{i}$, a contradiction. If $v_{s}\in D_{i}$ and $N_{G_{s}}(v_{p1})\nsubseteq D_{i}$, let $v_{q}\in N_{G_{s}}(v_{p1})\setminus D_{i}$. Then, update $D_{i}$ as $D_{i}=(D_{i}\setminus \{v_{p}\})\cup \{v_{q}\}$. Note that $D_{i}$ is still a semi-PD-set of $G_{i}$ of minimum cardinality, and $|D_{i} \cap \{v_{s+1},\ldots,v_{r},v_{r+1},\ldots,v_{i}\}|<l$, a contradiction.

Since both Case~1 and Case~2 produce a contradiction, there exists a semi-PD-set $D_{i}$ of $G_{i}$ of minimum cardinality, for which the set $D_{i}\cap \{v_{s+1},\ldots,v_{r},v_{r+1},\ldots,v_{i}\}$ contains only $v_{j}$ and $v_{r}$.
\end{proof}

\begin{lemma}
\label{lem6}
  For $r<k<j<i$, let $F(v_{i})=v_{j}$, $F(v_{j})=v_{k}$ $F(v_{k})=v_{r}$. If every vertex $v_{l}$ where $k<l<j$, is adjacent to at least one vertex in the set $\{v_{j},v_{r}\}$, then the following holds.\\
(a) $D_{i}=\{v_{j},v_{r}\}$ if $L(v_{r})=v_{0}$.\\
(b) $D_{i}=\{v_{1},v_{2},v_{j},v_{r}\}$ if $L(v_{r})=v_{1}$.\\
(c) $D_{i}=D_{s}\cup \{v_{j},v_{r}\}$ if $L(v_{r})=v_{s}$ with $s\geq 2$.
\end{lemma}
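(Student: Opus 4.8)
The plan is to exploit the structural decomposition provided by Lemma~\ref{lem5}. Recall that under the hypotheses of Lemma~\ref{lem6} we have $F(v_i)=v_j$, $F(v_j)=v_k$, $F(v_k)=v_r$ with $r<k<j<i$, and every vertex $v_l$ with $k<l<j$ is dominated by $\{v_j,v_r\}$. Lemma~\ref{lem5} already tells us that there is a minimum semi-PD-set $D_i$ of $G_i$ with $\{v_j,v_r\}\subseteq D_i$, that $v_j$ and $v_r$ are semipaired in $D_i$, and — crucially — that $D_i\cap\{v_{s+1},\ldots,v_r,v_{r+1},\ldots,v_i\}=\{v_j,v_r\}$, where $v_s=L(v_r)$. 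So the only thing left to analyze is which vertices of $\{v_1,\ldots,v_s\}$ must lie in $D_i$, i.e.\ what $D_i$ looks like "below" the index $s$; this is exactly where the three cases (a), (b), (c) branch on the value of $L(v_r)$.

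For part~(a), when $L(v_r)=v_0$, by definition $v_r$ is adjacent to every vertex $v_1,\ldots,v_{r-1}$, and by the left-end ordering $v_r$ is also adjacent to all of $v_{r+1},\ldots,v_i$ that it needs to cover (the vertices between $r$ and $j$ are covered by $\{v_j,v_r\}$ by hypothesis, and $v_j$ covers $v_i$ and everything around it). I would check directly that $\{v_j,v_r\}$ is already a dominating set of $G_i$: every vertex with index $\le r$ is dominated by $v_r$ (since $L(v_r)=v_0$), every vertex with index between $r$ and $j$ is dominated by $\{v_j,v_r\}$ by assumption, and every vertex with index between $j$ and $i$ is dominated by $v_j$ because $F(v_i)=v_j$ forces $v_j$ to reach up to $v_i$ and the left-end ordering propagates adjacency. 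Since $d(v_j,v_r)\le 2$ (they share the common neighbor structure through $v_k$, or more simply $v_r=F(v_k)$ and $v_k$ is a neighbor of $v_j$... actually $F(v_j)=v_k$ so $v_k v_j\in E$ and $v_k v_r\in E$, giving $d(v_j,v_r)\le 2$), $\{v_j,v_r\}$ is a semi-PD-set, and no semi-PD-set can have fewer than $2$ vertices, so $D_i=\{v_j,v_r\}$.

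For parts~(b) and~(c), the vertices $v_1,\ldots,v_s$ are not all dominated by $\{v_j,v_r\}$ — in particular $v_s=L(v_r)\notin N[v_r]$, and $v_s\notin N[v_j]$ either since $N[v_j]\subseteq N[v_r]\cup(\text{higher indices})$ in the relevant range, so $v_s$ needs its own dominator from among $\{v_1,\ldots,v_s\}$. The key observation is that in $G_i$, adding $\{v_j,v_r\}$ dominates precisely $G_i-\{v_1,\ldots,v_s\}$ plus possibly some low-index vertices, so that the residual domination requirement on $G_i$ restricted to indices $\le s$ is exactly the requirement of finding a semi-PD-set of $G_s=G[V_s]$; combined with the "disjointness" clause of Lemma~\ref{lem5}(c) that forces $D_i$ to use nothing from $\{v_{s+1},\ldots,v_i\}$ beyond $\{v_j,v_r\}$, one gets $D_i=D_s\cup\{v_j,v_r\}$ with $|D_i|=|D_s|+2$. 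For case~(b), $L(v_r)=v_1$ means $v_1$ is the highest-indexed non-neighbor of $v_r$, so $G_s=G_1$ is a single vertex — but a semi-PD-set needs at least two vertices and $v_1$ must be semipaired with a neighbor at distance $\le 2$; the natural choice (as in Lemmas~\ref{lem3} and~\ref{lem4}) is $\{v_1,v_2\}$, and one checks $\{v_1,v_2\}$ dominates all of $\{v_1,\ldots,v_s\}$ (here $s$ is small) while $v_2\in N[v_1]$; hence $D_i=\{v_1,v_2,v_j,v_r\}$. For case~(c) with $s\ge 2$, $G_s$ is a genuine connected interval subgraph on at least two vertices, so $D_s$ is well-defined, and $D_i=D_s\cup\{v_j,v_r\}$.

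The main obstacle I expect is the careful verification that the optimization decouples: that a minimum semi-PD-set of $G_i$ restricted appropriately really does induce a semi-PD-set of $G_s$ of cardinality $|D_i|-2$, and conversely that $D_s\cup\{v_j,v_r\}$ is a valid semi-PD-set of $G_i$ of that size. The forward direction needs Lemma~\ref{lem5}: take the optimal $D_i$ guaranteed there, set $D'=D_i\setminus\{v_j,v_r\}=D_i\cap V_s$ (by part~(c)), and argue $D'$ dominates $G_s$ — this uses that $\{v_j,v_r\}$ cannot have been the sole dominator of any vertex of $V_s$ except possibly low-index ones also covered by $D'$, which requires a short argument about which vertices of $V_s$ lie in $N[v_j]\cup N[v_r]$. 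One must also handle the semipairing bookkeeping: $v_j$ is semipaired with $v_r$, so removing them leaves the semipairing of $D'$ intact, and conversely any semipairing of $D_s$ extends. The reverse direction is routine given part~(a)'s domination analysis: $\{v_j,v_r\}$ covers everything with index $>s$ together with the middle block, and $D_s$ covers $V_s$. Edge cases where $s$ is small (so that $G_s$ overlaps the indices near $k,r$) need a sentence of care, but the left-end ordering and the hypothesis on the $v_l$'s keep this under control.
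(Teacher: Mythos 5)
Your proposal follows essentially the same route as the paper: it leans on Lemma~\ref{lem5} for the structural facts that $\{v_j,v_r\}\subseteq D_i$, that they are semipaired, and that nothing else of $D_i$ lies above index $s$, then handles (a) and (b) by direct domination checks and proves (c) via the two inequalities $|D_i|\le|D_s|+2$ (since $D_s\cup\{v_j,v_r\}$ is a semi-PD-set of $G_i$) and $|D_s|\le|D_i|-2$ (since $D_i\setminus\{v_j,v_r\}$ is a semi-PD-set of $G_s$). The paper's proof of (a) and (b) is terser ("clearly"), but your filled-in details match its intent, so this is the same argument.
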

\begin{proof}
(a) Clearly $D_{i}=\{v_{j},v_{r}\}$.\\

(b) From Lemma~\ref{lem5}, we know that $\{v_{j},v_{r}\}\subseteq D_{i}$. Also, other than $v_{1}$, all vertices are dominated by the set $\{v_{j},v_{r}\}$. Hence, $D_{i}=\{v_{1},v_{2},v_{j},v_{r}\}$.\\

(c) Clearly $D_{s}\cup \{v_{j},v_{r}\}$ is a semi-PD-set of $G_{i}$. Hence $|D_{i}|\leq |D_{s}|+2$. We also know that there exists a semi-PD-set $D_{i}$ of $G_{i}$ of minimum cardinality such that $D_{i}\cap \{v_{s+1},v_{s+2},\ldots,v_{i}\} = \{v_{j},v_{r}\}$ (where $v_{j}$ and $v_{r}$ are semipaired in $D_{i}$). Hence $D_{i}\setminus \{v_{j},v_{r}\}\subseteq V(G_{s})$. Also, $\{v_{j},v_{r}\}$ dominates the set $\{v_{s+1},v_{s+2},\ldots,v_{n}\}$, implying that the set $\{v_{1},v_{2},\ldots,v_{s}\}$ is dominated by the vertices in $D_{i}\setminus \{v_{j},v_{r}\}$. Hence, the set $D_{i}\setminus \{v_{j},v_{r}\}$ is semi-PD-set of $G_{s}$. Therefore, $|D_{s}|\leq |D_{i}|-2$. This proves that $|D_{i}|=|D_{s}|+2$.
Hence, $D_{i}=D_{s}\cup \{v_{j},v_{r}\}$.
\end{proof}

\begin{lemma}
\label{lem7}
For $r<k<j<i$, let $F(v_{i})=v_{j}$, $F(v_{j})=v_{k}$ $F(v_{k})=v_{r}$, and  $\{v_{l}\mid k<l<j\} \nsubseteq N_{G_{i}}[v_r] \cup N_{G_{i}}[v_j]$. Let  $t=\max\{l \mid k<l<j$ and $v_{l}v_{j}\notin E \}$ (assume that such a $t$ exists). Let $F(v_{t})=v_{b}$. Then, the following holds.\\
(a) $\{v_{j},v_{b}\}\subseteq D_{i}$.\\
(b) $v_{j}$ is semipaired with $v_{b}$ in $D_{i}$.\\
(c) $D_{i}\cap \{v_{s+1},\ldots,v_{b},v_{b+1},\ldots,v_{i}\}=\{v_{j},v_{b}\}$.
\end{lemma}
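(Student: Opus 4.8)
The plan is to derive all three parts by a chain of replacement (exchange) arguments applied to an arbitrary minimum semi-PD-set $D_i$ of $G_i$, in close parallel with the proof of Lemma~\ref{lem5}; the one genuinely new ingredient is that $v_t$, being a non-neighbour of $v_j$ in the interval $(v_k,v_j)$, cannot be dominated by $v_j$ and so forces its own dominator, for which the cheapest choice is $v_b=F(v_t)$. I would begin by recording two structural facts. First, since $v_k=F(v_j)$, the edge $v_kv_j$ with $k<j$ together with the left-end ordering lemma makes $v_k$ adjacent to every $v_l$ with $k<l<j$; in particular $v_kv_t\in E$, hence $b\le k$, and applying the ordering lemma once more to $v_bv_t$ gives $v_b$ adjacent to (or equal to) $v_k$, so $d_{G_i}(v_b,v_j)\le 2$. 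Second, by the maximality of $t$ every $v_l$ with $t<l<j$ is adjacent to $v_j$, and $v_j=F(v_i)$ makes $v_j$ adjacent to every $v_l$ with $j<l\le i$; together with $v_b=F(v_t)$ forcing $\{v_b,\dots,v_t\}\subseteq N_{G_i}[v_b]$ this yields $\{v_b,\dots,v_i\}\subseteq N_{G_i}[v_b]\cup N_{G_i}[v_j]$ and, more precisely, the containment $N_{G_i}[w]\subseteq N_{G_i}[v_b]\cup N_{G_i}[v_j]$ for every $w\in N_{G_i}[v_t]$, proved by a short case split on whether the index of $w$ is at most $t$ or exceeds $t$.

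For (a), to dominate $v_i$ some $v_{i_1}\in D_i\cap N_{G_i}[v_i]$, and exactly as in Lemma~\ref{lem5}(a) one has $N_{G_i}[v_{i_1}]\subseteq N_{G_i}[v_j]$, so $v_{i_1}$ may be replaced by $v_j$ (re-routing its semipartner, which then lands within distance $2$ of $v_j$); thus $v_j\in D_i$. Since $v_tv_j\notin E$, the vertex $v_j$ does not dominate $v_t$, so $D_i$ contains some $v_{t_1}\in N_{G_i}[v_t]$ with $v_{t_1}\ne v_j$; by the containment above, replacing $v_{t_1}$ by $v_b$ preserves domination because $v_j$ is already present. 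The semipartner of $v_{t_1}$ then lies within distance $2$ of $v_b$ or of $v_j$; in the first case we re-pair it with $v_b$, and in the second we first re-pair $v_j$ with $v_b$ (legitimate by the first structural fact) and then re-pair the remaining stranded vertices, a further exchange of the same flavour as in Lemma~\ref{lem5}(b), occasionally bringing in a fresh vertex lying outside the neighbourhood of a stranded vertex. These steps put $\{v_j,v_b\}$ into a minimum semi-PD-set with $v_j$ and $v_b$ semipaired, which settles (a) and (b) simultaneously.

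For (c), I would repeat the argument of Lemma~\ref{lem5}(c) almost verbatim with $v_r$ replaced by $v_b$. Writing $L(v_b)=v_s$, the pair $\{v_b,v_j\}$ already dominates $\{v_{s+1},\dots,v_i\}$; among minimum semi-PD-sets containing $\{v_j,v_b\}$ as a semipair, take one minimizing $|D_i\cap\{v_{s+1},\dots,v_i\}|$ and suppose this quantity is at least $3$. Choosing $v_p\in D_i$ with $p\ne j,b$ and $s+1\le p\le i$, semipaired with $v_{p_1}$, and splitting into the cases $p_1>s$ and $p_1\le s$, one analyses whether $v_s\in D_i$ and whether $N_{G_s}(v_s)$ (resp.\ $N_{G_s}(v_{p_1})$) is contained in $D_i$, and in every subcase either deletes $\{v_p,v_{p_1}\}$ or exchanges it for vertices of index at most $s$, strictly decreasing $|D_i\cap\{v_{s+1},\dots,v_i\}|$, a contradiction. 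Hence $D_i\cap\{v_{s+1},\dots,v_i\}=\{v_j,v_b\}$.

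The part I expect to be most delicate is the re-pairing bookkeeping inside (a)/(b): once the containment $N_{G_i}[w]\subseteq N_{G_i}[v_b]\cup N_{G_i}[v_j]$ is established the vertex replacements are immediate, but showing that the forced re-pairings can always be completed without enlarging $D_i$ requires tracing a stranded semipartner through one additional exchange, just as in Lemma~\ref{lem5}. The distance bound $d_{G_i}(v_b,v_j)\le 2$, the coverage claim that $\{v_b,v_j\}$ dominates $\{v_{s+1},\dots,v_i\}$, and the Lemma~\ref{lem5}(c)-style case analysis are then routine consequences of the left-end ordering lemma and the maximality of $t$.
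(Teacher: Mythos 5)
Your proposal is correct and follows essentially the same route as the paper: exchange arguments replacing the dominator of $v_i$ by $v_j$ and the (necessarily distinct, since $v_tv_j\notin E$) dominator of $v_t$ by $v_b$, a re-pairing case analysis to make $v_j$ and $v_b$ semipaired, and a verbatim transfer of the Lemma~\ref{lem5}(c) argument for part (c); your explicit structural facts ($b\le k$, $d_{G_i}(v_b,v_j)\le 2$, and $N_{G_i}[w]\subseteq N_{G_i}[v_b]\cup N_{G_i}[v_j]$ for $w\in N_{G_i}[v_t]$) make precise what the paper's ``similarly'' leaves implicit. One small slip: the fresh vertex used to rescue a stranded semipartner must lie \emph{inside} that vertex's neighbourhood but outside $D_i$, not outside the neighbourhood as you wrote (the paper's own Lemma~\ref{lem5}(b) contains the same typo).
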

\begin{proof} (a) First we show that $v_j \in D_i$. Suppose $v_j \notin D_i$. Let $v_p$ be the vertex dominating $v_i$ in $D_i$. Note that $j< p \leq i$ and $N_{G_{i}}[v_p] \subseteq N_{G_{i}}[v_j]$. Let $v_q$ be the vertex semipaired with $v_p$ in $D_i$. Since $N[v_p] \subseteq N[v_j]$, any vertex which is within distance~$2$ from $v_p$ is also within distance~$2$ from $v_j$. We can update $D_i$ as $D_i \setminus \{v_p\} \cup \{v_j\}$ with $v_j$ semipaired with $v_q$. Hence, $D_i$ contains $v_j$. Similarly, we can show that $D_i$ also contains $v_b$. So, $\{v_j, v_b\} \subseteq D_i$.\\

\noindent(b) If $v_{j}$ is semipaired with $v_{b}$ in $D_{i}$, then we are done. Suppose, to the contrary, that $v_{j}$ is not semipaired with $v_{b}$ in $D_{i}$. So, assume that $v_{j}$ is semipaired with $v_{p}$ and $v_{b}$ is semipaired with $v_{q}$ in $D_{i}$. We consider the four cases based on the values of the indices $p$ and $q$.

\noindent\emph{Case~1. $p>b$ and $q>b$.} Here, $N_{G_{i}}[v_{p}] \cup N_{G_{i}}[v_{q}] \subseteq N_{G_{i}}[v_{j}] \cup N_{G_{i}}[v_{b}]$. Hence, the set $D_{i}\setminus \{v_{p},v_{q}\}$ is also a semi-PD-set of $G_{i}$, a contradiction.

\noindent\emph{Case~2.  $p<b$ and $q<b$.} Since the distance between $v_{p}$ and $v_{j}$ is at most $2$, $p\geq r$. If $q<b$ and $d_{G_{i}}(v_{q},v_{b})\leq 2$, then $d_{G_{i}}(v_{q},v_{p})\leq 2$. So, in the set $D_{i}$, $v_{j}$ can be semipaired with $v_{b}$, and $v_{p}$ can be semipaired with $v_{q}$.

\noindent\emph{Case~3. $p>b$ and $q<b$.} Here, $N_{G_{i}}[v_{p}] \subseteq N_{G_{i}}[v_{j}] \cup N_{G_{i}}[v_{b}]$. If $N_{G_{i}}(v_{q})\subseteq D_{i}$, then the set $D_{i}\setminus \{v_{p},v_{q}\}$ is also a semi-PD-set of $G_{i}$, a contradiction. If $N_{G_{i}}(v_{q})\nsubseteq D_{i}$, let $v_{x}\in N_{G_{i}}(v_{q})\setminus D_{i}$. Then update $D_{i}$ as $D_{i}=(D_{i}\setminus \{v_{p}\})\cup \{v_{x}\}$, and semipair $v_{q}$ with $v_{x}$ and $v_{j}$ with $v_{b}$.

\noindent\emph{Case~4. $p<b$ and $q>b$.} Since  the distance between $v_{p}$ and $v_{j}$ is at most~$2$, $p\geq r$. Also $N_{G_{i}}[v_{q}] \subseteq N_{G_{i}}[v_{j}] \cup N_{G_{i}}[v_{b}]$. If $N_{G_{i}}(v_{p})\subseteq D_{i}$, then the set $D_{i}\setminus \{v_{p},v_{q}\}$ is also a semi-PD-set of $G_{i}$, a contradiction. If $N_{G_{i}}(v_{p})\nsubseteq D_{i}$, let $v_{y}\in N_{G_{i}}(v_{p})\setminus D_{i}$. Then update $D_{i}$ as $D_{i}=(D_{i}\setminus \{v_{q}\})\cup \{v_{y}\}$, and  semipair $v_{p}$ with $v_{y}$ and $v_{j}$ with $v_{b}$.

By the above four cases, there always exists a semi-PD-set $D_{i}$ of $G_{i}$ of minimum cardinality  such that $v_{j}$ is semipaired with $v_{b}$ in $D_{i}$. This completes the proof of part~(b).\\

\noindent(c) The proof is similar to the proof of Lemma~\ref{lem5}(c), and hence is omitted.
\end{proof}

\begin{lemma}\label{lem8}
For $r<k<j<i$, let $F(v_{i})=v_{j}$, $F(v_{j})=v_{k}$ $F(v_{k})=v_{r}$, and  $\{v_{l}\mid k<l<j\} \nsubseteq N_{G_{i}}[v_r] \cup N_{G_{i}}[v_j]$. Let  $t=\max\{l \mid k<l<j$ and $v_{l}v_{j}\notin E \}$ (assume that such a $t$ exists). Let $F(v_{t})=v_{b}$. Then, the following holds.\\
(a) $D_{i}=\{v_{j},v_{b}\}$ if $L(v_{b})=v_{0}$.\\
(b) $D_{i}=\{v_{1},v_{2},v_{j},v_{b}\}$ if $L(v_{b})=v_{1}$.\\
(c) $D_{i}=D_{s}\cup \{v_{j},v_{b}\}$ if $L(v_{b})=v_{s}$ with $s\geq 2$.
\end{lemma}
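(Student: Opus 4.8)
The plan is to follow the same pattern used for Lemma~\ref{lem6}, now using the vertex $v_b$ (where $F(v_t)=v_b$) in the role that $v_r$ played there, and appealing to Lemma~\ref{lem7} in place of Lemma~\ref{lem5}. By Lemma~\ref{lem7}, there is a minimum semi-PD-set $D_i$ of $G_i$ with $\{v_j,v_b\}\subseteq D_i$, with $v_j$ semipaired with $v_b$, and with $D_i\cap\{v_{s+1},\ldots,v_b,v_{b+1},\ldots,v_i\}=\{v_j,v_b\}$ where $v_s=L(v_b)$. The key structural fact I would first record is that $N_{G_i}[v_b]\cup N_{G_i}[v_j]$ dominates the entire ``tail'' $\{v_{s+1},\ldots,v_i\}$: every $v_l$ with $k<l<j$ is adjacent to $v_j$ or $v_b$ by hypothesis (for $l>t$ this is from the definition of $t$; for $l\le t$, since $F(v_t)=v_b$ and the left-end ordering is an interval ordering, $v_b$ is adjacent to all of $v_b,\ldots,v_t$), every $v_l$ with $j\le l\le i$ is adjacent to $v_j$ (by the interval ordering, since $F(v_i)=v_j$), and every $v_l$ with $b<l\le k$ is adjacent to $v_b$ (since $F(v_k)=v_r<b$ would contradict... — more carefully, $F(v_k)=v_r$ and $r<b$, so $v_b$ with $r<b\le k$ is adjacent to $v_k$, hence to all $v_b,\ldots,v_k$). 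Finally $v_b$ itself dominates all of $\{v_s,\ldots\}$ downward up to where $L(v_b)$ cuts off, i.e. by definition of $L(v_b)=v_s$, $v_b$ is adjacent to every $v_l$ with $s<l<b$. Thus $\{v_j,v_b\}$ dominates $\{v_{s+1},\ldots,v_i\}$.

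With that in hand, part~(a): if $L(v_b)=v_0$, then $v_b$ is adjacent to all of $v_1,\ldots,v_{b-1}$, so $\{v_j,v_b\}$ dominates all of $G_i$ and, being a valid semipair, is itself a semi-PD-set; minimality forces $D_i=\{v_j,v_b\}$. Part~(b): if $L(v_b)=v_1$, then $v_b$ dominates $v_2,\ldots,v_{b-1}$ and $\{v_j,v_b\}$ dominates the tail, so only $v_1$ can fail to be dominated; hence $D_i\supseteq$ something covering $v_1$, and since $G[V_i]$ is connected with $\ge 2$ vertices and we need a semipair, adding $\{v_1,v_2\}$ (which are adjacent as $G_1\subseteq\cdots$ is connected, so $v_1v_2\in E$) gives the semi-PD-set $\{v_1,v_2,v_j,v_b\}$ of size $4$; a matching lower bound of $4$ comes from observing $v_1$ needs its own dominator disjoint from $\{v_j,v_b\}$ and the semipairing parity, exactly as in Lemma~\ref{lem6}(b).

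Part~(c) is the substantive case and the main obstacle. The plan is: first, $D_s\cup\{v_j,v_b\}$ is a semi-PD-set of $G_i$ — $D_s$ dominates $V(G_s)=\{v_1,\ldots,v_s\}$, $\{v_j,v_b\}$ dominates the tail $\{v_{s+1},\ldots,v_i\}$ and is a legitimate semipair (distance at most $2$ by Lemma~\ref{lem7}(b)) — so $|D_i|\le|D_s|+2$. For the reverse inequality, take a minimum $D_i$ satisfying the conclusions of Lemma~\ref{lem7}, so $D_i\cap\{v_{s+1},\ldots,v_i\}=\{v_j,v_b\}$ and hence $D_i\setminus\{v_j,v_b\}\subseteq V(G_s)$. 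Since $\{v_j,v_b\}$ dominates only vertices with index $>s$ (here I use $L(v_b)=v_s$, so $v_b$ is \emph{not} adjacent to $v_s$, and $v_j$ certainly is not adjacent to anything with index $\le s\le k<j$, using $F(v_j)=v_k$ and $s\le$ the relevant index — this monotonicity of neighborhoods along the left-end ordering is what needs care), the set $\{v_1,\ldots,v_s\}$ must be dominated entirely by $D_i\setminus\{v_j,v_b\}$; and since $v_j,v_b$ are semipaired with each other, $D_i\setminus\{v_j,v_b\}$ inherits a partition into semipairs at distance at most $2$ in $G_s$ (distances only shrink or stay equal in the larger graph, but here we need that two vertices of $V(G_s)$ semipaired in $G_i$ are still within distance $2$ in $G_s$ — this follows because any length-$2$ path between them through a vertex of index $>s$ would require that vertex to be adjacent to two vertices of $V_s$, contradicting... actually one must argue the connector can be taken inside $V_s$ by the interval property). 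Hence $D_i\setminus\{v_j,v_b\}$ is a semi-PD-set of $G_s$, giving $|D_s|\le|D_i|-2$. Combining, $|D_i|=|D_s|+2$ and $D_i=D_s\cup\{v_j,v_b\}$. I expect the delicate point to be justifying that semipairs of $D_i$ lying in $V(G_s)$ remain at distance at most $2$ within $G_s$ — i.e. that one never needs a ``shortcut'' through the tail — which is exactly the kind of argument already carried out inside Lemma~\ref{lem7}(c) and Lemma~\ref{lem6}(c), so I would import it verbatim and note the proof is ``similar to Lemma~\ref{lem6}''.
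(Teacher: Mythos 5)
Your proposal is correct and takes essentially the same route as the paper, which itself disposes of Lemma~\ref{lem8} in one line by saying the proof mirrors that of Lemma~\ref{lem6} with Lemma~\ref{lem7} playing the role of Lemma~\ref{lem5} and $v_b$ playing the role of $v_r$ --- exactly the substitution you carry out. The one delicate point you flag in part~(c), that two vertices of $V(G_s)$ semipaired in $G_i$ via a connector of index greater than $s$ are still within distance~$2$ in $G_s$, is settled by the paper's first lemma on left-end orderings (a common neighbor of larger index forces the two vertices to be adjacent already), so your plan closes as intended.
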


\begin{proof}
The proof is similar to the proof of Lemma~\ref{lem6}, and hence is omitted.
\end{proof}

Based on above lemmas, we present an algorithm to compute a minimum semi-PD-set of an interval graph.

\begin{algorithm}[H]
\textbf{Input:} An interval graph $G=(V,E)$ with a left end ordering $\alpha=(v_{1},v_{2},\ldots,v_{n})$ of vertices of $G$.\\
\textbf{Output:} A semi-PD-set $D$ of $G$ of minimum cardinality.\\
\small $V'=V;$\\
\While{$(V' \neq \phi)$}{
 Let $i=$ max$\{k \mid v_k \in V'\}$.
\If{$(F(v_i) = v_1)$}{
	 $D = D \cup \{v_1,v_i\};$\\
	 $V'= V' \setminus \{v_1,v_2, \ldots, v_i\};$}
\ElseIf{$(F(v_i) = v_j$ and $F(v_j) = v_1$ where $j>1)$}{
	 $D = D \cup \{v_1,v_j\};$
	 $V'= V' \setminus \{v_1,v_2, \ldots, v_i\};$}
\ElseIf{$(F(v_i) = v_j$ and $F(v_j) = v_k$ where $k \geq 2)$}{
	 Let $F(v_k) = v_r.$
	\If{$\{v_{k+1},v_{k+2},\ldots,v_{j-1}\}\subseteq N_{G}[v_{j}]\cup N_{G}[v_{r}]$}{
		\If {$(L(v_r) = v_0)$ }{
			 $D = D \cup \{v_j,v_r\};$\\
			 $V'= V' \setminus \{v_1,v_2, \ldots, v_i\};$}
		\ElseIf {$(L(v_r) = v_1)$ }{
			 $D = D \cup \{v_1,v_2,v_j,v_r\};$\\
			 $V'= V' \setminus \{v_1,v_2, \ldots, v_i\};$}
		\Else{
			 Let $(L(v_r) = v_s)$ where $s \geq 2$.\\
			 $D = D \cup \{v_j,v_r\}$;\\
			 $V'= V' \setminus \{v_{s+1},v_{s+2}, \ldots, v_i\};$
		}}
	\Else{
		 Let $t=$ max$\{l \mid k<l<j$ and $v_l \notin N_{G}(v_{j})\}$ and $F(v_t)=v_b$.
		\If {$(L(v_b) = v_0)$ }{
			 $D = D \cup \{v_j,v_b\};$\\
			 $V'= V' \setminus \{v_1,v_2, \ldots, v_i\};$}
		\ElseIf {$(L(v_b) = v_1)$ }{
			 $D = D \cup \{v_1,v_2,v_j,v_b\};$\\
			 $V'= V' \setminus \{v_1,v_2, \ldots, v_i\};$}
		\Else
        {
			 Let $(L(v_b) = v_s)$ where $s \geq 2$.\\
			 $D = D \cup \{v_j,v_b\};$\\
			 $V'= V' \setminus \{v_{s+1},v_{s+2}, \ldots, v_i\};$
		}
	}
}
}
\caption{\textbf{SEMI-PAIRED-DOM-IG(G)}}
\label{alg:seq}
\end{algorithm}

Here, we illustrate the algorithm SEMI-PAIRED-DOM-IG, with the help of an example. An interval graph $G$ and its interval model $I$ is shown in Fig~\ref{interval}.

\begin{figure}[h!]
 \begin{center}
  \includegraphics[width=12cm, height=5cm]{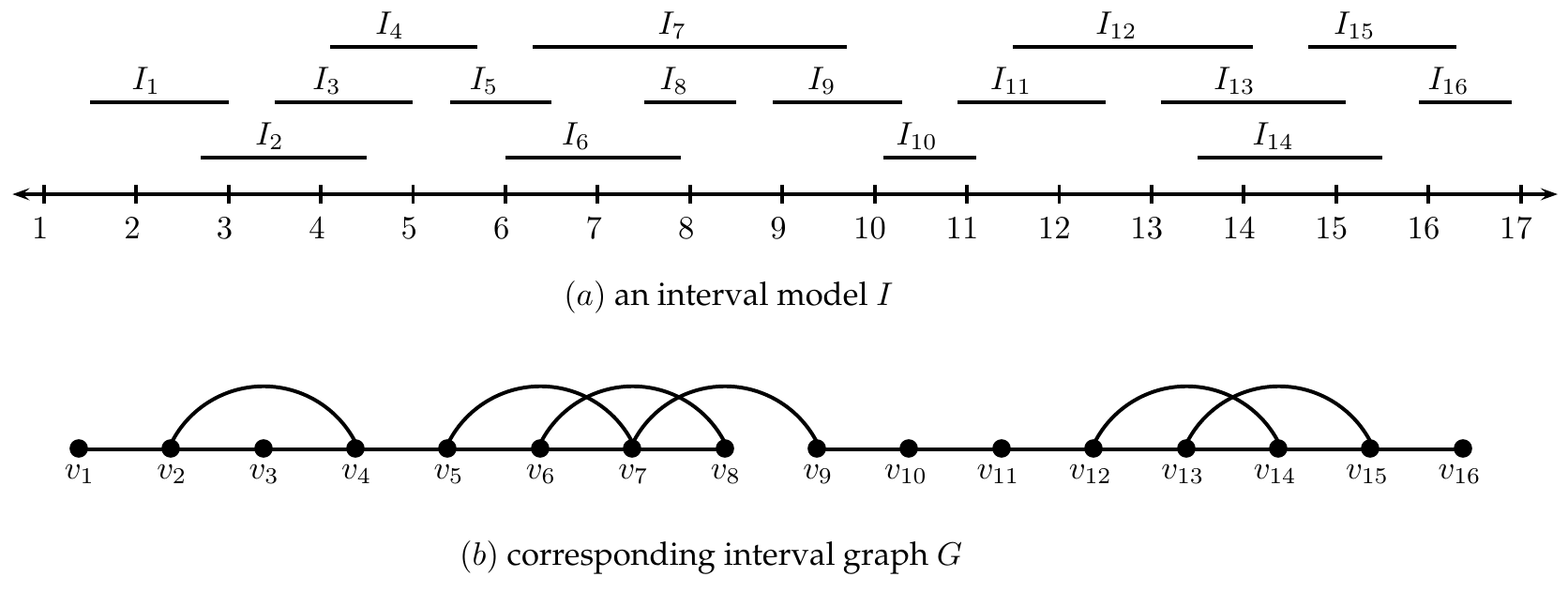}
 \caption{An interval model $I$ and corresponding interval graph $G$.}
\label{interval}
\end{center}
\end{figure}

For the interval graph $G$ given in Fig.~\ref{interval}, the algorithm SEMI-PAIRED-DOM-IG computes a semi-PD-set of minimum cardinality in $3$ iterations. Below, we illustrate all the $3$ iterations of the algorithm.\\


\[
\begin{tabular}{ |c| }
\hline
\small \underline{\textsc{Initially}}\\$V'=\{v_1,v_2, \ldots ,v_{16}\}$ and $D=\phi$.\vspace{0.15cm}\\
\hline
\large \underline{\textsc{Iteration $1$}}\\
$i=16$ and $F(v_i)=F(v_{16})=v_{15} \neq v_1$\\
$j=15$ and $F(v_j)=F(v_{15})=v_{13} \neq v_1$\\
$k=13$ and $F(v_k)=F(v_{13})=v_{12}$\\
$r=12$ and $\{v_{k+1} \ldots , v_{j-1}\}= \{v_{14}\} \subseteq N_{G}[v_j ] \cup N_{G}[v_r]$\\
Since $L(v_r)=L(v_{13})=v_{10}$ and $s=10 > 2$,\\
$D=D \cup \{v_{13},v_{15}\}$ and $V'=V' \setminus \{v_{11} \ldots v_{16}\}$.\\
\vspace{0.15cm}
\underline{\textsc{After Iteration $1$}}\\
$D=\{v_{13},v_{15}\}$ and  $V'=\{v_{1},v_{2} \ldots v_{10}\}$\vspace{0.15cm}\\
\hline
\vspace{0.15cm}
\large \underline{\textsc{Iteration $2$}}\\
$i=10$ and $F(v_i)=F(v_{10})=v_{9} \neq v_1$\\
$j=9$ and $F(v_j)=F(v_{9})=v_{7} \neq v_1$\\
$k=7$ and $F(v_k)=F(v_{7})=v_{5}$\\
$r=5$ and $\{v_{k+1} \ldots , v_{j-1}\} = \{v_{9}\}\nsubseteq N_{G}[v_j ] \cup N_{G}[v_r]$\\
In this case $t =$ max$\{l \mid k < l < j$ and $v_l \notin N_{G}(v_j)\}=8$ and \\$F(v_t) = F(v_8) = v_6$ $($clearly $b=6)$\\
Since $L(v_b)=L(v_{6})=v_{4}$ and $s=4 > 2$,\\
$D=D \cup \{v_{6},v_{9}\}$ and $V'=V' \setminus \{v_{5} \ldots v_{10}\}$.\\
\vspace{0.15cm}
\underline{\textsc{After Iteration $2$}}\\
$D=\{v_6,v_9, v_{13},v_{15}\}$ and  $V'=\{v_{1},v_{2},v_3,v_{4}\}$\vspace{0.15cm}\\
\hline
\vspace{0.15cm}
\large \underline{\textsc{Iteration $3$}}\\
$i=4$ and $F(v_i)=F(v_{4})=v_{2} \neq v_1$\\
$j=2$ and $F(v_j)=F(v_{2})=v_{1} $, hence\\
$D=D \cup \{v_{1},v_{2}\}$ and $V'=V' \setminus \{v_{1},v_{2},v_3,v_{4}\}$.\\
\vspace{0.15cm}
\underline{\textsc{After Iteration $3$}}\\
$D=\{v_1,v_2,v_6,v_9, v_{13},v_{15}\}$ and  $V'=\phi$\\
As $V'=\phi$ hence, loop terminates.\vspace{0.15cm}\\
\hline
\end{tabular}
\\ \vspace*{.3cm}
\]
Our algorithm returns the set $D=\{v_1,v_2,v_6,v_9, v_{13},v_{15}\}$, which is a minimum cardinality semi-PD-set of the interval graph $G$.
\begin{theorem}
Given a left end ordering of vertices of $G$, the algorithm SEMI-PAIRED-DOM-IG computes a semi-PD-set of $G$ of minimum cardinality in linear-time.
\end{theorem}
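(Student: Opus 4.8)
The plan is to prove the theorem in two parts: correctness and running time. For correctness, I would argue that the algorithm is simply a procedural translation of Lemmas~\ref{lem3}--\ref{lem8}. The key structural observation is that the \While{} loop, at each iteration, looks at the current largest-indexed surviving vertex $v_i$, identifies which of the mutually exclusive cases of those lemmas applies, adds the prescribed two (or four) vertices to $D$, and then deletes from $V'$ exactly the prefix $\{v_1,\ldots,v_i\}$ (when the lemma says $D_i$ is completely determined) or the shorter prefix $\{v_{s+1},\ldots,v_i\}$, leaving a subproblem on $G_s$. I would set up a loop invariant: after each iteration, if $V' = \{v_1,\ldots,v_m\}$ for the current largest surviving index $m$, then the set $D$ accumulated so far together with \emph{any} minimum semi-PD-set $D_m$ of $G_m$ forms a minimum semi-PD-set of the original graph $G$; equivalently, $D$ restricted to the already-processed suffix equals $D_n \setminus D_m$ for some optimal $D_n$, and the semipairing is consistent. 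The case analysis showing the invariant is preserved is exactly Lemmas~\ref{lem3} (the $F(v_i)=v_1$ case), \ref{lem4} (the $F(v_i)=v_j$, $F(v_j)=v_1$ case), \ref{lem5} and \ref{lem6} (the case where $\{v_{k+1},\ldots,v_{j-1}\}\subseteq N_G[v_j]\cup N_G[v_r]$), and \ref{lem7} and \ref{lem8} (the complementary case via the vertex $v_t$). One must check these six cases are exhaustive and mutually exclusive for every $i\ge 2$, which follows by chasing the definition of $F$ and the clause structure of the algorithm.

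Second, I would verify termination and the prefix-shrinking property: every branch removes $v_i$ from $V'$, and because $G_i$ is connected (the second unnamed lemma), $F(v_i)$, $F(v_j)$, $F(v_k)$ are all well-defined with strictly decreasing indices, so whenever we reach the deeper cases the indices $r<k<j<i$ genuinely exist; when $L(v_r)=v_s$ (resp.\ $L(v_b)=v_s$) with $s\ge 2$, the new largest surviving index is $s<i$, so the loop makes progress, and when $L(\cdot)\in\{v_0,v_1\}$ the loop empties $V'$ and halts. Thus the number of iterations is at most $n$.

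For the linear-time bound, the plan is to precompute, in a single $O(n+m)$ preprocessing pass over the interval model (or the adjacency lists induced by the left-end ordering), the arrays $F(v_i)$, $L(v_i)$ for all $i$, and enough information to answer the membership test $\{v_{k+1},\ldots,v_{j-1}\}\subseteq N_G[v_j]\cup N_G[v_r]$ and to find $t=\max\{l \mid k<l<j,\ v_lv_j\notin E\}$ in $O(1)$ or amortized-constant time. Here I would exploit the interval structure heavily: since $N_G[v_j]$ consists of a contiguous block of indices around $j$ in the left-end ordering (by the first unnamed lemma, neighborhoods are intervals of indices on one side), the condition $v_lv_j\notin E$ for $k<l<j$ is monotone in $l$, so $t$ is found by comparing $k$ with the smallest index adjacent to $v_j$, i.e.\ $t$ is determined directly from $F(v_j)$ and $F(v_r)$; similarly the subset test reduces to a comparison of a few precomputed indices. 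I would also note that each vertex is deleted from $V'$ exactly once and each iteration's work (besides the $O(1)$ lookups) is proportional to the size of the deleted block, so the total work across all iterations telescopes to $O(n)$, giving $O(n+m)$ overall including preprocessing.

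The main obstacle I anticipate is the linear-time claim rather than correctness: one has to be careful that the test $\{v_{k+1},\ldots,v_{j-1}\}\subseteq N_G[v_j]\cup N_G[v_r]$ and the computation of $t$ and of $F(v_t)=v_b$ really can be done without scanning the block $\{v_{k+1},\ldots,v_{j-1}\}$ each time, since naively that scanning could cost $\Theta(n)$ per iteration and $\Theta(n^2)$ total. The resolution is to observe that a vertex $v_l$ with $k<l<j$ is \emph{not} in $N_G[v_j]\cup N_G[v_r]$ precisely when its interval lies strictly between the intervals associated to $v_r$ and $v_j$ in the model; using the left-end ordering and the precomputed $F$-array one can phrase this as an $O(1)$ arithmetic check and can locate the largest such $l$ in $O(1)$ via an auxiliary ``next non-neighbor'' pointer computed during preprocessing. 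Once that is in place, the rest is bookkeeping. A secondary, milder obstacle is making the loop-invariant argument airtight at the boundary cases $L(v_r)=v_1$ and $L(v_b)=v_1$, where Lemmas~\ref{lem6}(b) and~\ref{lem8}(b) force the extra vertices $v_1,v_2$ into $D$; here I would just cite those lemmas directly and confirm the algorithm's corresponding \ElseIf{} branches match them verbatim.
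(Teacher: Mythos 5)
Your proposal follows essentially the same route as the paper, whose entire proof is to invoke Lemmas~\ref{lem3}, \ref{lem4}, \ref{lem6} and \ref{lem8} for correctness and to assert that the $O(n+m)$ implementation ``can be easily seen.'' Your loop-invariant formulation and, in particular, your treatment of how to perform the subset test and locate $t$ without rescanning blocks are more careful than the paper's own argument, and they correctly address the one point the paper glosses over.
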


\begin{proof}
By Lemmas \ref{lem3}, \ref{lem4}, \ref{lem6} and \ref{lem8}, we can ensure that the algorithm SEMI-PAIRED-DOM-IG computes a semi-PD-set of $G$ of minimum cardinality. Also, it can be easily seen that the algorithm can be implemented in $O(m+n)$ time, where $n=|V(G)|$ and $m=|E(G)|$.
\end{proof}


\section{Algorithm for Trees}
\label{Sec:6}
In this section, we present a linear-time algorithm to compute a minimum cardinality semipaired dominating set in trees.

Let $T=(V,E)$ be a tree, and $\beta=(v_{n},v_{n-1},\ldots, v_{1})$ be the BFS ordering of vertices of $T$ starting at a pendant vertex $v_{n}$. Let $\alpha=(v_{1},v_{2},\ldots,v_{n})$ be the reverse ordering of $\beta$. In our algorithm, we process the vertices in the order they appear in $\alpha$. Let $p(v_{i})$ denote the parent of vertex $v_{i}$. If $v_{i}$ is the root vertex, we assume $p(v_{i})=v_{i}$.

The idea behind our algorithm is the following. We start with an empty set $D$, an array $L$ and an array $M$. Initially $L[v_{i}]=0$ and $M[v_{i}]=0$ for all $v_{i}\in V$. We process the vertices one by one in the order $\alpha=(v_{1},v_{2},\ldots,v_{n})$. During each of the iterations, we update $D$, $L$ and $M$ suitably. During the iterations, $L[v_{i}]=0$ if $v_{i}$ is not selected in $D$, $L[v_{i}]=1$ if $v_{i}$ is selected in $D$ but not semipaired, and $L[v_{i}]=2$ if $v_{i}$ is selected in $D$ and semipaired. Also, $M[v_{i}]=k$ if $v_{k}$ need to be semipaired with some vertex in $N_{T}[v_{i}]\setminus D$. At the end of the algorithm $D$ becomes a minimum cardinality semi-PD-set of the given tree $T$.
At the $i^{th}$ iteration, we process the vertex $v_{i}$. While processing $v_{i}$, we update $D$, $L$ and $M$ as follows.\\
\textbf{Case 1:}  $i\neq n,n-1$ and $v_{i}$ is not dominated by $D$.\\
\textbf{Subcase 1.1:} For every $v_{r}\in N_{T}[p(v_{i})]$, $M[v_{r}]=0$.\\
Update $D=D\cup \{p(v_{i})\}$, $L[p(v_{i})]=1$ and $M[p(v_{j})]=j$, where $v_{j}=p(v_{i})$.\\
\textbf{Subcase 1.2:} For some $v_{r}\in N_{T}[p(v_{i})]$, $M[v_{r}]\neq 0$. \\
Let $C=\{v_{r}\in N_{T}[p(v_{i})]\mid M[w]\neq 0 \}$. Let $v_{k}$ be the least index vertex in $C$ and $m[v_{k}]=v_{s}$. Update $L[p(v_{i})]=L[v_{s}]=2$, and $D=D\cup \{p(v_{i})\}$.\\
\textbf{Case 2:} $i \in \{n,n-1\}$ and  $v_{i}$ is not dominated by $D$.\\
Update $L[v_{n-1}]=L[v_{n}]=2$, and $D=D\cup \{v_{n-1},v_{n}\}$.\\
\textbf{Case 3:} $v_{i}$ is dominated by $D$ and $M[v_{i}]=0$.\\
No Update in $D$, $L$ and $M$ are made.\\
\textbf{Case 4:} $v_{i}$ is dominated by $D$ and $M[v_{i}]=k\neq 0$ (that is, $v_{k}$ need to be semipaired with some vertex in $N_{T}[v_{i}]\setminus D$).\\
\textbf{Subcase 4.1:} $L[p(v_{i})]=0$.\\
Update $L[p(v_{i})]=L[v_{k}]=2$, $M[v_{i}]=0$ and $D=D\cup \{p(v_{i})\}$.\\
\textbf{Subcase 4.1:} $L[p(v_{i})]=1$.\\
This case will not arrive.\\
\textbf{Subcase 4.3:} $L[p(v_{i})]=2$.\\
Update $L[v_{i}]=L[v_{k}]=2$, $M[v_{i}]=0$ and $D=D\cup \{v_{i}\}$.

\begin{theorem}
The \textsc{Minimum Semipaired Domination} problem is linear-time solvable in trees.
\end{theorem}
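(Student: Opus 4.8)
The plan is to show that the labelling procedure described above, run on a tree $T$ with $n$ vertices, terminates in $O(n)$ time and returns a semi-PD-set $D$ of $T$ with $|D|=\gamma_{pr2}(T)$.

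\textbf{Running time.} A BFS ordering $\beta$ rooted at a pendant vertex, its reverse $\alpha$, and the parent pointers $p(\cdot)$ are computed once in $O(n)$ time, and the arrays $L,M$ are initialised in $O(n)$ time. The main loop runs $n$ iterations, one per vertex. When $v_i$ is processed, the only work whose cost is not $O(1)$ is (i) deciding whether $v_i$ is already dominated by the current $D$, which we support in $O(1)$ by keeping a boolean \emph{dominated}-flag per vertex that is switched on whenever a neighbour of it is inserted into $D$ (each insertion into $D$ touches $O(\deg_T(v))$ flags), and (ii) scanning $N_T[p(v_i)]$ in Case~1 for a vertex with nonzero $M$-value, which costs $O(\deg_T(p(v_i)))$. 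Since $T$ has $n-1$ edges, $\sum_{v}\deg_T(v)=2(n-1)$, so the procedure runs in $O(n)$ time.

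\textbf{Feasibility.} Write $V_i=\{v_1,\dots,v_i\}$ and let $(D^{(i)},L^{(i)},M^{(i)})$ be the state after iteration $i$. I would prove, by induction on $i$, the invariants: (I1) every vertex of $V_i$ is dominated by $D^{(i)}$; (I2) $L^{(i)}[v_j]$ equals $0$, $1$, or $2$ according as $v_j\notin D^{(i)}$, or $v_j\in D^{(i)}$ is still unpaired, or $v_j\in D^{(i)}$ has been given a partner at distance at most $2$; (I3) at most one vertex $v_j$ has $L^{(i)}[v_j]=1$, and if it exists then exactly one entry of $M^{(i)}$ is nonzero, namely $M^{(i)}[v_r]=j$ for a vertex $v_r$ with $r>i$ that lies on the $v_j$--root path and satisfies $v_j\in N_T[v_r]$. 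Invariant (I1) is immediate since $D$ only grows and, whenever $v_i$ is undominated (Cases~1 and 2), a neighbour of $v_i$ is inserted. Invariants (I2)--(I3) follow from a routine check of Subcases~1.1, 1.2, 2, 4.1, 4.3; the one point needing attention is that every new semipair has distance at most $2$, which holds because in each case the two paired vertices have a common neighbour --- in Subcase~1.2 the vertices $p(v_i)$ and $v_s$ are both adjacent to $v_k$, in Subcase~4.1 the vertices $p(v_i)$ and $v_k$ are both adjacent to $v_i$, and in Subcase~4.3 the two vertices are adjacent. Using (I3) one also verifies that Subcase~4.2 never occurs and that, at termination, no vertex is left with $L$-value $1$: the unique pending obligation is discharged no later than when its witness $v_r$ (or the root) is processed, via Case~4, with the near-the-root situation absorbed by Case~2. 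Hence $D=D^{(n)}$ is a dominating set admitting a partition into $2$-sets at pairwise distance at most $2$, i.e.\ a semi-PD-set of $T$, so $|D|\ge\gamma_{pr2}(T)$.

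\textbf{Optimality.} This is the crux. I would prove $|D|\le\gamma_{pr2}(T)$ by induction on $n$, in the spirit of Lemmas~\ref{lem3}--\ref{lem8} for interval graphs: examine the first nontrivial action of the procedure, show it is compatible with some \emph{minimum} semi-PD-set of $T$, and then argue the remaining work is an equivalent instance on fewer vertices. The first processed vertex $v_1$ is a leaf of the BFS tree, hence a leaf of $T$, with $N_T(v_1)=\{p(v_1)\}$; a one-line exchange shows there is a minimum semi-PD-set $D^{\ast}$ with $p(v_1)\in D^{\ast}$ (if $v_1\in D^{\ast}$ but $p(v_1)\notin D^{\ast}$, replace $v_1$ by $p(v_1)$: domination is preserved and the former partner of $v_1$, being within distance $2$ of the leaf $v_1$, is a neighbour of $p(v_1)$, hence within distance $2$ of $p(v_1)$). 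The genuine difficulty is that the procedure does \emph{not} immediately fix a partner for $p(v_1)$: it records the obligation in $M$ and routes it upward along the tree, so one cannot simply delete $v_1$'s subtree and recurse. I would handle this by turning the $M$-mechanism into a \emph{frontier invariant}: for every iteration $i$ I would relate $|D^{(i)}|$ together with its at-most-one pending request to the minimum number of vertices that any semi-PD-set of $T$ must place inside $V_i$ (the pending vertex's partner still being free to lie in $N_T[v_r]$), and carry this inequality through all of Cases~1--4, each time \emph{bundling} a semipair the procedure completes with the portion of $T$ it finalises and showing, by a local exchange as above, that an optimal solution spends no fewer vertices on that bundle; the forced insertion of both $v_{n-1},v_n$ in Case~2, necessitated precisely because near the root a pending partner cannot be routed further up, serves as the base case. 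Summing the per-bundle bounds yields $|D|\le\gamma_{pr2}(T)$, and combined with feasibility this gives $|D|=\gamma_{pr2}(T)$, so the procedure is an $O(n)$-time algorithm for \textsc{Minimum Semipaired Domination} on trees.
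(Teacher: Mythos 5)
The paper itself offers no proof of this theorem --- it only gives the informal case description of the algorithm and then states the result --- so there is nothing in the source to compare your argument against; your attempt has to stand on its own. Your running-time analysis is correct and complete (the amortization over $\sum_v \deg_T(v)=2(n-1)$ is the standard and right way to get $O(n)$), and your feasibility section is essentially sound as a sketch, though invariant (I3) --- that at every moment at most one vertex is unpaired and exactly one $M$-entry is nonzero --- is doing all the work there and is asserted rather than proved. It is not a routine check: two disjoint subtrees hanging below a common ancestor can each trigger Subcase~1.1 before that ancestor is reached, and you need to argue that the second trigger necessarily falls into Subcase~1.2 (because the first obligation was planted at a vertex of $N_T[p(v_i)]$) so that the two obligations merge into a semipair rather than coexist. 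Without that, the claim that Subcase~4.2 never occurs and that no $L$-value $1$ survives to termination is unsupported.

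The genuine gap is the optimality bound $|D|\le\gamma_{pr2}(T)$. What you have written there is a plan, not a proof: the ``frontier invariant'' relating $|D^{(i)}|$ plus its pending obligation to the minimum number of vertices any semi-PD-set must place inside $V_i$ is never formulated as a precise inequality, and the per-case exchange arguments that would carry it through Cases~1--4 are not executed. This is exactly the part that cannot be waved through, because the pending-obligation mechanism couples the cost spent inside $V_i$ to a partner that an optimal solution may choose to place \emph{outside} $V_i$ (in $N_T[v_r]$ with $r>i$), so the natural statement ``$|D^{(i)}|\le|D^{\ast}\cap V_i|$'' is false in general and the correct invariant must account for a half-pair of credit crossing the frontier; pinning down that accounting, and verifying it in the subcase where the algorithm closes a pair using a vertex recorded in $M$ (Subcase~1.2) versus where it opens a new one (Subcase~1.1), is the actual content of the optimality proof, analogous to what Lemmas~\ref{lem5}--\ref{lem8} do explicitly in the interval-graph case. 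As it stands, your argument establishes that the procedure runs in linear time and outputs \emph{some} semi-PD-set, but not that the output is minimum.
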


\section{Approximation Algorithm}\label{Sec:7}
In this section, we present a greedy approximation algorithm for the \textsc{Minimum Semipaired Domination} problem in graphs. We also provide an upper bound on the approximation ratio of this algorithm. The greedy algorithm is described as follows.

\medskip
\begin{algorithm}[H]
\caption{\textbf{:} APPROX-SEMI-PAIRED-DOM-SET(G)}
 \textbf{Input:} A graph $G=(V,E)$ with no isolated vertex.\\
\textbf{Output:} A semi-PD-set $D$ of $G$.\\
\small \Begin{
$D=\emptyset$;\\
$i=0$; $D_{0}=\emptyset$;\\
 \While{$(V\setminus (D_{0}\cup D_{1}\cup \ldots \cup D_{i}) \neq \emptyset)$}{
$i=i+1;$\\
choose two distinct vertices $u,v\in V$ such that $d_{G}(u,v)\leq 2$ and $|(N_{G}[u]\cup N_{G}[v])\setminus (D_{0}\cup D_{1}\cup\ldots \cup D_{i-1})|$ is maximized;\\
$D_{i}=(N_{G}[u]\cup N_{G}[v])\setminus (D_{0}\cup D_{1}\cup\ldots \cup D_{i-1})$;\\
$D=D\cup \{u,v\}$;\\
}
 return $D$;
}
\end{algorithm}

\begin{lemma}\label{lem:ap:1}
The algorithm APPROX-SEMI-PAIRED-DOM-SET produces a semi-PD-set of $G$ in polynomial time.
\end{lemma}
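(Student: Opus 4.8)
The plan is to check the two assertions of the lemma separately: that the loop terminates after polynomially many iterations each doing polynomially much work, and that the returned set $D$ is indeed a semi-PD-set, i.e.\ a dominating set that partitions into $2$-element subsets of pairwise distance at most~$2$.

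For the running time, I would first precompute $d_G(u,v)$ for all pairs by running a BFS from each vertex (time $O(n(n+m))$) and maintain the covered set $C_{i-1}=D_0\cup\cdots\cup D_{i-1}$ as a Boolean array. Two observations make the iteration count small. First, the ``choose'' step is always feasible: $G$ is connected with at least two vertices, so some pair $u,v$ with $d_G(u,v)=1$ exists. Second, each iteration strictly enlarges the covered set: while the loop runs there is an uncovered vertex $w$, and since $G$ has no isolated vertex, $w$ has a neighbour $w'$; the pair $(w,w')$ already contributes $w\notin C_{i-1}$, so the maximum in the ``choose'' step is at least $1$, whence $D_i\neq\emptyset$ and $D_i\cap C_{i-1}=\emptyset$. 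Thus the covered set gains at least one vertex per iteration, giving at most $n$ iterations; as each iteration scans $O(n^2)$ candidate pairs at cost $O(n)$ each, the total time is $O(n^4+nm)$, which is polynomial.

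For domination, on termination $V\setminus(D_0\cup\cdots\cup D_t)=\emptyset$, so $V=\bigcup_{i=1}^{t}D_i$ with $D_i\subseteq N_G[u_i]\cup N_G[v_i]$, where $\{u_i,v_i\}$ is the pair chosen in iteration~$i$. Hence every vertex lies in $N_G[u_i]$ or $N_G[v_i]$ for some~$i$, so it is in $D=\{u_1,v_1,\dots,u_t,v_t\}$ or adjacent to a vertex of~$D$; thus $D$ dominates~$G$.

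The partition into $2$-sets is where the real work lies. The candidate partition is $\{\{u_1,v_1\},\dots,\{u_t,v_t\}\}$; each block has two distinct vertices at distance at most~$2$ by the ``choose'' step, so the only thing to verify is that the blocks are pairwise disjoint (equivalently, no vertex is selected twice), which in particular forces $|D|$ to be even. I would argue this by noting that once a vertex $u$ is selected in iteration~$j$ we have $N_G[u]\subseteq C_j$, so in any later iteration the marginal gain of a pair $\{u,v\}$ is $|N_G[v]\setminus C|$, never exceeding that of $\{v,v'\}$ for a neighbour $v'$ of~$v$; hence re-selecting $u$ is never strictly better than a choice avoiding~$u$. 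I would make this rigorous by adopting a tie-breaking rule in the ``choose'' step — among all maximizers, take one meeting the current $D$ in the fewest vertices — and then run a short case analysis (according to whether the uncovered vertex forcing the iteration has a neighbour outside the current~$D$) to conclude that the chosen pair is always disjoint from the previously selected vertices. Pinning down this disjointness cleanly, especially in the endgame when the only remaining uncovered vertex has its entire neighbourhood already inside~$D$, is the step I expect to be the main obstacle; the rest is routine bookkeeping.
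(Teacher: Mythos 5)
Your write-up does far more work than the paper, whose entire proof of this lemma is a two-sentence assertion ("Clearly, the output set $D$ \ldots is a semi-PD-set \ldots each step \ldots can be computed in polynomial time"). Your termination and domination arguments are correct and match what the paper takes for granted: each iteration has a feasible pair of positive marginal gain, so the covered set grows and the loop runs at most $n$ times, and on termination $V=\bigcup_i D_i$ with $D_i\subseteq N_G[u_i]\cup N_G[v_i]$ gives domination.

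The disjointness of the chosen pairs is indeed the only non-trivial point, and you are right that the pseudocode as literally written does not force it: under adversarial tie-breaking the maximizer could contain a previously selected vertex, which would make $D$ grow by one and could leave $|D|$ odd (the paper silently assumes $|D|$ equals twice the number of iterations in Theorem~\ref{t:thm4} as well). However, the "endgame" case you flag as the main obstacle dissolves with one observation: a vertex selected in iteration $j$ has its entire closed neighborhood inside $D_0\cup\cdots\cup D_j$, so an \emph{uncovered} vertex can neither be previously selected nor be adjacent to a previously selected vertex. Consequently, if a maximizing pair $\{u,v\}$ has $u$ previously selected, its gain equals $|N_G[v]\setminus C|\ge 1$, so some $z\in N_G[v]$ is uncovered; whichever of $v,z$ is uncovered forces both $v$ and one of its neighbors in $N_G[v]$ to be fresh, and that fresh pair attains gain at least $|N_G[v]\setminus C|$, hence is also a maximizer. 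So a maximizer disjoint from the current $D$ always exists, and your proposed tie-breaking rule (or simply reading the algorithm as choosing such a pair) makes the blocks $\{u_i,v_i\}$ pairwise disjoint and completes the proof. With that one sentence added, your argument is complete and is strictly more careful than the paper's.
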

\begin{proof}
Clearly, the output set $D$ produced by the algorithm APPROX-SEMI-PAIRED-DOM-SET is a semi-PD-set of $G$. Also, each step of the algorithm can be computed in polynomial time. Hence, the lemma follows.
\end{proof}

\begin{lemma}\label{lem:ap:2}
For each vertex $v\in V$, there exists exactly one set $D_{i}$ which contains $v$.
\end{lemma}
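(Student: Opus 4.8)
The plan is to prove Lemma~\ref{lem:ap:2} by tracking how the algorithm APPROX-SEMI-PAIRED-DOM-SET partitions the vertex set of $G$. I would show two things: first that every vertex lies in \emph{some} $D_i$, and second that the $D_i$'s are pairwise disjoint, so that membership is unique.

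\emph{Existence.} The \textbf{while} loop terminates only when $V\setminus(D_0\cup D_1\cup\cdots\cup D_i)=\emptyset$. Since the loop does terminate (each iteration adds at least the two newly chosen vertices $u,v$ — indeed at least one new vertex, because $u$ and $v$ are distinct and chosen so that $N_G[u]\cup N_G[v]$ is not already contained in the previously removed vertices; if it were entirely contained, the loop guard would already be false — so the set $\bigcup_{j\le i}D_j$ strictly grows and is bounded by the finite set $V$), upon termination we have $V=D_0\cup D_1\cup\cdots\cup D_t$ for the final index $t$. Hence every $v\in V$ belongs to at least one $D_i$. I should also note $D_0=\emptyset$, so in fact $v\in D_i$ for some $i\ge 1$.

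\emph{Uniqueness.} This is the routine but essential part: I would argue by the definition of $D_i$ in the algorithm. At iteration $i$, the set chosen is
\[
D_i = (N_G[u]\cup N_G[v])\setminus (D_0\cup D_1\cup\cdots\cup D_{i-1}),
\]
which is, by construction, disjoint from $D_0\cup\cdots\cup D_{i-1}$. Therefore for any $j<i$ we have $D_i\cap D_j=\emptyset$. Since $j<i$ was arbitrary, the family $\{D_i\}$ is pairwise disjoint. Combined with the existence argument, each $v\in V$ lies in exactly one $D_i$.

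\emph{Main obstacle.} There is no real obstacle here — the statement follows almost immediately from the explicit set-difference that defines $D_i$. The only point requiring a sentence of care is justifying that the loop terminates (equivalently, that each iteration removes a previously-uncovered vertex), which relies on the observation that as long as the loop guard holds there is an uncovered vertex, and any uncovered vertex $w$ together with a neighbor gives a pair $(u,v)$ with $|(N_G[u]\cup N_G[v])\setminus(D_0\cup\cdots\cup D_{i-1})|\ge 1$, so the maximizing pair also covers at least one new vertex; hence $D_i\ne\emptyset$ and the uncovered set strictly shrinks. This lemma will then be used to set up a charging/weighting argument for the approximation ratio in the theorem that follows.
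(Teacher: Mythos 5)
Your proposal is correct and follows essentially the same route as the paper's proof: the while-loop guard gives $V=D_0\cup D_1\cup\cdots\cup D_{|D|/2}$, and the explicit set difference in the definition of $D_i$ gives pairwise disjointness. The paper states these two facts in two sentences; your additional justification of termination is sound but not needed beyond that.
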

\begin{proof}
We note that $V=D_{0} \cup D_{1}\cup \ldots D_{|D|/2}$. Also, if $v\in D_{i}$, then $v\notin D_{j}$ for $i<j$. Hence, the lemma follows.
\end{proof}

By Lemma~\ref{lem:ap:2}, there exists only one index $i\in [|D|/2]$ such that $v\in D_{i}$ for each $v\in V$. We now define $d_{v}=\frac{1}{|D_{i}|}$.
Now we are ready to prove the main theorem of this section.

\begin{theorem}
\label{t:thm4}
The \textsc{Minimum Semipaired Domination} problem for a graph $G$ with maximum degree $\Delta$ can be approximated with an approximation ratio of $1+\ln(2\Delta+2)$.
\end{theorem}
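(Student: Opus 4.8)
## Proof Proposal

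The plan is to bound the size of the greedy solution $D$ against $\gamma_{pr2}(G)$ using a standard charging/LP-duality argument adapted to the set-cover-like structure of the algorithm. The key point is that each iteration picks a pair $\{u,v\}$ with $d_G(u,v) \le 2$ and adds to the cover the set $D_i = (N_G[u] \cup N_G[v]) \setminus (D_0 \cup \cdots \cup D_{i-1})$ of \emph{newly dominated} vertices. Since $|N_G[u] \cup N_G[v]| \le 2\Delta + 2$, each iteration covers at most $2\Delta+2$ new vertices. So the first step is to record the charge $d_v = 1/|D_i|$ assigned to each $v$ in Lemma~\ref{lem:ap:2}'s notation, and observe immediately that $|D| = 2\sum_{i \ge 1} 1 = \sum_{i \ge 1} |D_i| \cdot \frac{2}{|D_i|}$; more usefully, $|D| = 2\sum_{v \in V} d_v$ since each $D_i$ contributes $|D_i| \cdot \frac{1}{|D_i|} = 1$ to $\sum_v d_v$ and there are $|D|/2$ iterations.

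Next I would fix a minimum semi-PD-set $D^* = \{x_1, y_1, x_2, y_2, \ldots, x_t, y_t\}$ with $x_\ell$ semipaired with $y_\ell$ and $2t = \gamma_{pr2}(G)$. The crucial claim is that for each semipair $\{x_\ell, y_\ell\}$, the total charge $\sum_{v \in N_G[x_\ell] \cup N_G[y_\ell]} d_v$ is at most the harmonic number $H(2\Delta+2) = \sum_{k=1}^{2\Delta+2} \frac1k \le 1 + \ln(2\Delta+2)$. This is the heart of the argument and mirrors the classical greedy set-cover analysis: process the iterations $i = 1, 2, \ldots$ in order; just before iteration $i$, let $R$ be the set of vertices of $N_G[x_\ell] \cup N_G[y_\ell]$ not yet covered. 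If $R \ne \emptyset$, then since $d_G(x_\ell, y_\ell) \le 2$ the pair $\{x_\ell, y_\ell\}$ was a \emph{candidate} pair at iteration $i$, so by the greedy maximality $|D_i| \ge |R|$, hence each of the (at most $|R|$, in fact exactly the number of) vertices of $N_G[x_\ell]\cup N_G[y_\ell]$ covered at iteration $i$ receives charge $\frac{1}{|D_i|} \le \frac{1}{|R|}$. Summing over iterations and telescoping the harmonic sum (as $|R|$ strictly decreases) gives $\sum_{v \in N_G[x_\ell]\cup N_G[y_\ell]} d_v \le H(|N_G[x_\ell]\cup N_G[y_\ell]|) \le H(2\Delta+2)$.

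Finally, since $D^*$ is a dominating set, every vertex of $V$ lies in $N_G[x_\ell] \cup N_G[y_\ell]$ for some $\ell \in [t]$, so
\[
\sum_{v \in V} d_v \;\le\; \sum_{\ell=1}^{t} \sum_{v \in N_G[x_\ell] \cup N_G[y_\ell]} d_v \;\le\; t\,\bigl(1 + \ln(2\Delta+2)\bigr).
\]
Combining with $|D| = 2\sum_{v\in V} d_v$ and $2t = \gamma_{pr2}(G)$ yields $|D| \le \gamma_{pr2}(G)\,(1 + \ln(2\Delta+2))$, which together with Lemma~\ref{lem:ap:1} proves the theorem.

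The main obstacle I anticipate is justifying the maximality step cleanly: one must be careful that $\{x_\ell, y_\ell\}$ is genuinely an admissible choice at every iteration where $R \ne \emptyset$ (it is, because the distance condition $d_G(x_\ell,y_\ell)\le 2$ is intrinsic to $D^*$ and does not change), and that the ``newly covered'' vertices of $R$ at iteration $i$ all get charge exactly $1/|D_i| \le 1/|R|$ — this needs the double counting $\sum_v d_v = \#\text{iterations}$ from Lemma~\ref{lem:ap:2} to be set up precisely, and the telescoping must account for the possibility that several vertices of $R$ are covered in the same iteration (handled by the inequality $\frac{1}{|D_i|}(|R_{\text{before}}| - |R_{\text{after}}|) \le H(|R_{\text{before}}|) - H(|R_{\text{after}}|)$). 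A secondary point is the edge case where the while-loop's final iteration covers fewer than two genuinely new vertices, but this only helps the bound and can be absorbed.
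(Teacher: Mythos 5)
Your proposal is correct and follows essentially the same route as the paper: the same charge $d_v = 1/|D_i|$, the identical decomposition of $\sum_v d_v$ over the semipairs of an optimal solution $D^*$, and the same greedy-maximality plus harmonic telescoping argument (the paper's $z_k$ sequence is your shrinking set $R$). No substantive difference to report.
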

\begin{proof}For any finite set $X\neq \emptyset$, $\displaystyle\sum_{x\in X}\frac{1}{|X|}=1$. Hence, we have $$|D|=2\sum_{i=1}^{\frac{|D|}{2}}\sum_{w\in D_{i}}\frac{1}{|D_{i}|}=2\sum_{w\in V}d_{w}.$$

Let $D^{*}=\{u_{1},v_{1},u_{2},v_{2},\ldots,$ $u_{\frac{|D^{*}|}{2}},v_{\frac{|D^{*}|}{2}}\}$ be a semi-PD-set of $G$ of  minimum cardinality, where $u_{i}$ is semipaired with $v_{i}$, for each $i \in [\frac{|D^{*}|}{2}]$. Define $M = \{\{u_{1},v_{1}\},\{u_{2},v_{2}\},\ldots,\{u_{\frac{|D^{*}|}{2}},v_{\frac{|D^{*}|}{2}}\}\}$. Note that for each vertex $w$, there exists a pair $\{u_{i},v_{i}\}\in M$ such that $w\in N_{G}[u_{i}]\cup N_{G}[v_{i}]$. Hence, the following inequality follows.

$$\displaystyle\sum_{w\in V}d_{w}\leq \sum_{\{u_{i},v_{i}\}\in M} \sum_{w\in N_{G}[u_{i}]\cup N_{G}[v_{i}]} d_{w}.$$

Consider a pair $\{u,v\}\in M$ and define $z_{k}=|(N_{G}[u]\cup N_{G}[v])\setminus (D_{0}\cup D_{1} \cup D_{2} \cup \ldots D_{k})|$ for $k\in \{0\} \cup[\frac{|D|}{2}]$. Clearly, $z_{k-1}\geq z_{k}$ for $k\in [\frac{|D|}{2}]$. Suppose $l$ is the smallest index such that $z_{l}=0$. At the $k^{th}$ step of the algorithm, $D_{k}$ contains $z_{k-1}-z_{k}$ vertices from the set $N_{G}[u]\cup N_{G}[v]$. Hence

$$\displaystyle\sum_{w\in N_{G}[u]\cup N_{G}[v]}d_{w}=\sum_{k=1}^{l} (z_{k-1}-z_{k})\cdot \frac{1}{|D_{k}|}.$$

At the $k^{th}$ step of the algorithm, we choose the pair $u_{k},v_{k}$ such that $|D_{k}|=|(N_{G}[u_{k}]\cup N_{G}[v_{k}])\setminus (D_{0}\cup D_{1}\cup \cdots \cup D_{k-1})|$ is maximum. Hence $|D_{k}|\geq |(N_{G}[u]\cup N_{G}[v])\setminus (D_{0}\cup D_{1}\cup \cdots D_{k-1})|=z_{k-1}$. Therefore the following inequality follows.

$$\displaystyle\sum_{w\in N_{G}[u]\cup N_{G}[v]}d_{w}\leq \sum_{k=1}^{l} \frac{z_{k-1}-z_{k}}{z_{k-1}}.$$

For all integers $a<b$, we know that $H(b)-H(a)\geq \frac{b-a}{b}$, where $H(b)=\displaystyle\sum_{i=1}^{b}\frac{1}{i}$ and $H(0)=0$. Therefore
$$\displaystyle\sum_{w\in N_{G}[u]\cup N_{G}[v]}d_{w}\leq \sum_{k=1}^{l} H(z_{k-1})-H(z_{k})=H(z_{0})=H(|N_{G}[u]\cup N_{G}[v]|)\leq H(2\Delta+2).$$

It follows that
 $$|D|=2\displaystyle \sum_{w\in V}d_{w}\leq \sum_{\{u,v\}\in M}H(2\Delta +2)=|D^{*}|H(2\Delta +2)\leq (\ln(2\Delta+2)+1)\cdot |D^{*}|.$$

This shows that the \textsc{Minimum Semipaired Domination} problem can be approximated with an approximation ratio of $1+\ln(2\Delta+2)$.
\end{proof}

\section{Lower bound on approximation ratio} \label{Sec:8}
To obtain the lower bound on the approximation ratio  of the \textsc{Minimum Semipaired Domination} problem, we give an approximation preserving
reduction from the \textsc{Minimum Domination} problem. 
The following approximation hardness result is already known for the \textsc{Minimum Domination} problem.

\begin{theorem}\label{dom:ap}
\cite{chlebik}~For a graph $G=(V,E)$, the \textsc{Minimum Domination} problem cannot be approximated within $(1-\epsilon) \ln|V|$ for any $\epsilon > 0$ unless NP $\subseteq $ DTIME $(|V|^{O(\log\log|V|)})$.
\end{theorem}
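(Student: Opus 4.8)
The plan is to obtain this tight inapproximability as a consequence of Feige's celebrated lower bound for \textsc{Minimum Set Cover}, namely that \textsc{Set Cover} over a ground set of size $n$ cannot be approximated within $(1-\epsilon)\ln n$ unless NP $\subseteq$ DTIME$(n^{O(\log\log n)})$. The device connecting the two problems is the classical approximation-preserving equivalence between \textsc{Dominating Set} and \textsc{Set Cover}.

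First I would describe the reduction from \textsc{Set Cover} to \textsc{Dominating Set}. Given a ground set $U=\{e_1,\dots,e_n\}$ and a family $\mathcal{S}=\{S_1,\dots,S_m\}$ of subsets covering $U$, I build a graph $G$ with one \emph{element vertex} per $e_i$ and one \emph{set vertex} per $S_j$; I join $e_i$ to $S_j$ precisely when $e_i\in S_j$, and I turn the collection of set vertices into a clique. The number of vertices is $|V(G)|=n+m$, and the construction is clearly polynomial.

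Next I would verify that $\gamma(G)$ coincides with the optimal cover value. In one direction, any subfamily $\mathcal{C}\subseteq\mathcal{S}$ covering $U$ induces a dominating set of the same size: the corresponding set vertices dominate every element they contain and, being a clique, dominate one another. For the reverse direction I would argue that an arbitrary dominating set may be massaged, without increasing its cardinality, into one using only set vertices: if an element vertex $e_i$ is chosen, replacing it by any set vertex $S_j\ni e_i$ only enlarges the dominated region, since $S_j$ additionally dominates the whole set-clique and all other elements of $S_j$. Once only set vertices are used, domination of each $e_i$ forces some chosen $S_j\ni e_i$, so the chosen set vertices form a cover of equal size; hence $\gamma(G)$ equals the minimum cover.

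The delicate step, and the main obstacle, is transporting the \emph{exact} logarithmic coefficient. Feige's threshold is stated in terms of $\ln n$, whereas the domination hardness must read $(1-\epsilon)\ln|V(G)|=(1-\epsilon)\ln(n+m)$. Because the reduction preserves the optimum exactly, a hypothetical $(1-\epsilon)\ln|V(G)|$-approximation for \textsc{Minimum Domination} would yield a $(1-\epsilon)\ln(n+m)$-approximation for \textsc{Set Cover}; to contradict Feige's bound one must ensure $\ln(n+m)=(1+o(1))\ln n$. This is where care is required: I would invoke the fact, explicit in the construction underlying Feige's theorem and sharpened by Chleb\'ik and Chleb\'ikov\'a, that the hard \textsc{Set Cover} instances can be taken with a near-linear number of sets, $m=n^{1+o(1)}$, so that $\ln(n+m)=(1+o(1))\ln n$. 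Absorbing the lower-order blow-up into an arbitrarily small loss in $\epsilon$ then transfers the threshold verbatim to $\gamma(G)$, and, since the reduction runs in polynomial time, the complexity hypothesis NP $\subseteq$ DTIME$(|V|^{O(\log\log|V|)})$ is inherited unchanged.
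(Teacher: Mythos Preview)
The paper does not prove this theorem at all: it is quoted verbatim as a known result from Chleb\'{\i}k and Chleb\'{\i}kov\'{a} (reference~\cite{chlebik}) and is used only as a black box in the subsequent reduction for semipaired domination. There is therefore no ``paper's own proof'' to compare against.

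That said, your sketch is essentially the standard derivation of this bound and is correct in outline. The reduction you describe (set vertices forming a clique, element vertices attached by membership) does give $\gamma(G)$ equal to the optimum cover size, and your exchange argument for pushing a dominating set into the set-vertex side is the right one. You are also right to flag the instance-size issue: Feige's threshold is in $\ln n$ with $n$ the ground-set size, while the domination statement needs $\ln|V(G)|=\ln(n+m)$, so one must know that the hard \textsc{Set Cover} instances satisfy $m=n^{O(1)}$ (indeed $m=\mathrm{poly}(n)$ suffices, since $\ln(n+m)\le c\ln n$ and the constant is absorbed into $\epsilon$). This is exactly how the cited paper handles it, so your proposal matches the literature argument that the authors are invoking rather than reproving.
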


Now, we are ready to prove the following theorem.

\begin{theorem}\label{approx-hard}
For a graph $G=(V,E)$, the \textsc{Minimum Semipaired Domination} problem cannot be approximated within $(1-\epsilon) \ln|V|$ for any $\epsilon > 0$ unless NP $\subseteq $ DTIME $(|V|^{O(\log\log|V|)})$.
\end{theorem}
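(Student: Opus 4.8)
The plan is to construct an approximation-preserving reduction from \textsc{Minimum Domination} to \textsc{Minimum Semipaired Domination}, so that Theorem~\ref{dom:ap} transfers directly. Given an instance $G=(V,E)$ of \textsc{Minimum Domination} with $V=\{v_1,\ldots,v_n\}$, I would build a graph $G'$ by attaching to each vertex $v_i$ a short pendant gadget that forces any semi-PD-set to behave essentially like a dominating set of $G$, up to a controlled additive/multiplicative blow-up. The cleanest choice is to attach to each $v_i$ a path of length~$2$, say $v_i - x_i - y_i$ (or a single pendant $P_2$-gadget), and possibly to take two copies of $G$ joined through these gadgets so that semipairing partners are always available locally. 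The guiding principle: every semi-PD-set of $G'$ must pick, for each gadget, at least one vertex that dominates $y_i$, and that vertex can be pushed onto $v_i$ without loss; the semipairing constraint is then satisfied by pairing the two chosen vertices coming from the two copies (or from adjacent gadgets), so that $|D'| = 2\cdot(\text{something within a constant factor of }\gamma(G))$.

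The key steps, in order, are: (1) define $G'$ precisely and verify $|V(G')| = cn$ for a constant $c$ (so that $\ln|V(G')| = \ln n + O(1) = (1-o(1))\ln n$, which is what preserves the hardness factor); (2) show the forward direction: from a dominating set $S$ of $G$ with $|S|=k$, exhibit a semi-PD-set of $G'$ of size $f(k)$ where $f$ is linear, e.g. $f(k)=2k$ or $2k+O(1)$ — take the corresponding $v_i$'s in both copies and pair copy-1 with copy-2 vertex-by-vertex, checking distances are $\le 2$ and that the gadget vertices $x_i,y_i$ are all dominated; (3) show the backward direction: given a semi-PD-set $D'$ of $G'$, argue that in each gadget one may assume $D'$ contains $v_i$ (rather than $x_i$ or $y_i$) by the standard "slide toward the high-degree end" exchange argument already used repeatedly in Section~\ref{Sec:5}, deduce that $\{v_i : v_i \in D'\}$ (restricted to one copy) is a dominating set of $G$, and bound its size by roughly $|D'|/2$; (4) combine (2) and (3): if $G'$ had an $\alpha$-approximation for \textsc{Minimum Semipaired Domination}, then $G$ would have an $O(\alpha)$-approximation — more precisely the constants must be arranged so the factor stays $(1-\epsilon)\ln|V|$, which forces the reduction to be not merely linear but essentially size-preserving and ratio-preserving in the strong sense (a so-called gap-preserving or L-reduction style argument).

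The main obstacle I expect is the bookkeeping in step (4): to get the sharp $(1-\epsilon)\ln|V|$ constant (rather than just "APX-hardness" or "no constant factor"), the reduction cannot afford more than a $(1+o(1))$ multiplicative distortion between $\gamma(G)$ and $\tfrac12\gamma_{pr2}(G')$, and $|V(G')|$ must be $n^{1+o(1)}$. A plain "two copies plus $P_2$ gadgets" construction gives $\gamma_{pr2}(G') = 2\gamma(G)$ exactly and $|V(G')|=\Theta(n)$, which is ideal — so the real work is proving the exact identity $\gamma_{pr2}(G') = 2\gamma(G)$ cleanly, handling the corner cases where $\gamma(G)$ is so small that forced gadget vertices dominate the count, and verifying that an approximate semi-PD-set can be post-processed in polynomial time into the "canonical" form (all choices on $v_i$'s, paired across copies) without increasing its size. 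Once that identity and the polynomial-time rounding are in hand, plugging into Theorem~\ref{dom:ap} with $|V(G')| = \Theta(|V|)$ yields the claimed inapproximability, and one should double-check that $\ln|V(G')| \ge (1-\epsilon')\ln|V|$ absorbs the constant factor for any $\epsilon' > 0$ and all large $|V|$.
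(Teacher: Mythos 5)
Your overall strategy --- an approximation-preserving reduction from \textsc{Minimum Domination} with linear size blow-up and the exact relation $\gamma_{pr2}(G')=2\gamma(G)$ --- is the same as the paper's, but the concrete gadget you propose cannot deliver it. If you attach a pendant path $v_i - x_i - y_i$ (or a pendant $P_2$) to \emph{every} vertex of $G$, then every semi-PD-set must contain at least one vertex of each gadget merely to dominate $y_i$ (only $x_i$ and $y_i$ do so, and neither can be ``pushed onto $v_i$'' without leaving $y_i$ undominated), so $\gamma_{pr2}(G')\ge n$. Since the hard instances behind Theorem~\ref{dom:ap} have $\gamma(G)=o(n)$, this forced $\Omega(n)$ term swamps $2\gamma(G)$ and no logarithmic gap survives --- this is exactly the phenomenon in Lemma~\ref{l:lemGP4}, whose additive $2n_H$ term is why that reduction yields only NP-hardness and not inapproximability. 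It is not a ``corner case,'' as you suggest; it is the generic situation.

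There is a second gap in your step (3): you assert that the solution restricted to one copy has size roughly $|D'|/2$ and still dominates $G$, but with a plain two-copies-plus-gadgets construction a vertex chosen in copy~$2$ may be what dominates a vertex of copy~$1$, so you only obtain $\gamma(G)\le\gamma_{pr2}(G')$ rather than $\gamma(G)\le\gamma_{pr2}(G')/2$; that loses a factor of $2$ in the ratio and no longer contradicts Theorem~\ref{dom:ap}. The paper's construction repairs both defects simultaneously: the original vertices are placed into cliques $V^1\cup Z$ and $V^2\cup Z$, so no per-vertex forcing occurs and any single chosen pair dominates all of them for free, while each copy $k$ receives its own private set of checker vertices $W^k$ with $w_i^k$ adjacent exactly to $\{v_j^k\mid v_j\in N_G[v_i]\}$. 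Dominating $W^k$ is then equivalent to dominating $G$, the two copies must be handled separately, and whichever copy receives at most half of the solution certifies (after swapping each undominated $w_i^k$ for a neighbour $v_j^k$) a dominating set of $G$ of size at most $|D_{sp}|/2$; the clique $Z$ also supplies the distance-$2$ semipairing between $v_i^1$ and $v_i^2$ needed in the forward direction. To salvage your write-up you would need to replace the pendant paths by a construction of this clique-plus-checkers type.
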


\begin{proof}
Let $G=(V,E)$, where $V=\{v_{1},v_{2},\ldots,v_{n}\}$ be an arbitrary instance of the \textsc{Minimum Domination} problem. Now, we construct a graph $H=(V_{H},E_{H})$, an instance of the \textsc{Minimum Semipaired Domination} problem in the following way: $V_{H}=\{v_{i}^{1},v_{i}^{2},w_{i}^{1},w_{i}^{2},z_{i}\mid  i \in [n]\}$ and $E_{H}=\{w_{i}^{1}v_{j}^{1},w_{i}^{2}v_{j}^{2} \mid v_{j}\in N_{G}[v_{i}]\}\cup \{v_{i}^{1}v_{j}^{1},v_{i}^{2}v_{j}^{2}, z_{i}z_{j} \mid 1\leq i <j \leq n\}\cup \{v_{i}^{1}z_{j},v_{i}^{2}z_{j}\mid  i\in [n],j\in [n]\}$. Fig.~\ref{fig:3} illustrates the construction of $H$ from $G$.

\begin{figure}[h!]
 \begin{center}
  \includegraphics[width=11.5cm, height=4cm]{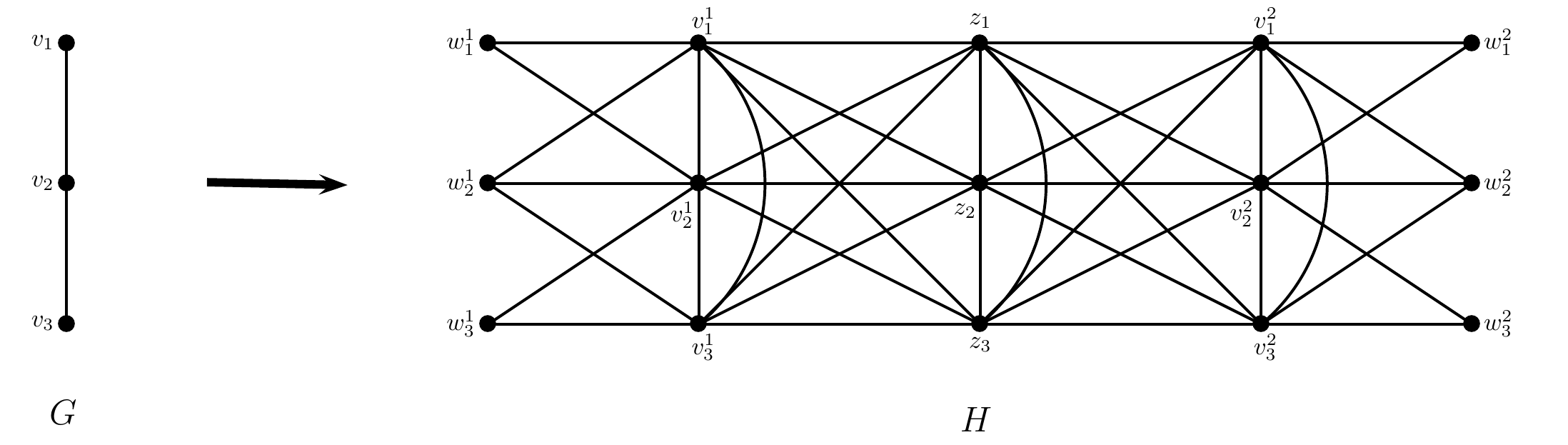}
 \caption{An illustration of the construction of $H$ from $G$  in the proof of Theorem~\ref{approx-hard}.}
\label{fig:3}
\end{center}
\end{figure}
Let $V^{k}=\{v_{i}^{k}\mid  i \in [n]\}$ and $W^{k}=\{w_{i}^{k}\mid i \in [n]\}$ for $k=1,2$. Also, assume that $Z=\{z_{i}\mid  i \in [n]\}$.
Note that $V^{1}\cup Z$ is a clique in $H$. Also $V^{2}\cup Z$ is a clique in $H$.

Let $D^{*}$ denote a minimum dominating set of $G$. Then the set $D'=\{v_{i}^{1},v_{i}^{2}\mid v_{i}\in D^{*}\}$ is a semi-PD-set of $H$. Hence, if $D_{sp}^{*}$ denotes a semi-PD-set of $H$ of minimum cardinality, then $|D_{sp}^{*}|\leq 2|D^{*}|$.

Suppose that the \textsc{Minimum Semipaired Domination} problem can be approximated within a ratio of $\alpha$, where $\alpha=(1-\epsilon)\ln(|V_{H}|)$ for some fixed $\epsilon>0$, by some polynomial time approximation algorithm, say \textbf{Algorithm A}. Next, we propose an algorithm, which we call \textbf{APPROX-DOMINATING-SET}, to compute a dominating set of a given graph $G$ in polynomial time.

\begin{algorithm}[H]
\caption{\textbf{: APPROX-DOMINATING-SET(G)}}
 \textbf{Input:} A graph $G=(V,E)$.\\
\textbf{Output:} A dominating set $D$ of $G$.\\
\small \Begin{
Initialize $k=0$;\\
Construct the graph $H$;\\
Compute a semi-PD-set $D_{sp}$ of $H$ using Algorithm~A;\\
Define $D_{sp}'=D_{sp}$;\\
\If {$(|D_{sp}'\cap (V^{1}\cup W^{1})|\leq |D_{sp}|/2)$}{
k=1;
}
\Else{
k=2;
}
\For {i=1 to n } {
\If {$(N_{H}(w_{i}^{k})\cap D_{sp}'==\emptyset)$} {
$D_{sp}'=(D_{sp}'\setminus w_{i}^{k})\cup \{v_{i}^{k}\}$;
}
}
$D=\{v_{i}\mid v_{i}^{k}\in D_{sp}'\cap V^{k}\};$\\
return $D$;
}
\end{algorithm}
Next, we show that the set $D$ returned by Algorithm $3$ is a dominating set of $G$. If $D_{sp}$ is any semi-PD-set of $H$, then clearly either $|D_{sp}\cap (V^{1}\cup W^{1})|\leq |D_{sp}|/2$ or $|D_{sp}\cap (V^{2}\cup W^{2})|\leq |D_{sp}|/2$. Assume that  $|D_{sp}\cap (V^{k}\cup W^{k})|\leq |D_{sp}|/2$ for some $k\in [2]$. Now, to dominate a vertex $w_{i}^{k}\in W^{k}$, either $w_{i}^{k}\in D_{sp}$ or $v_{j}^{k}\in D_{sp}$ where $v_{j}^{k}\in N_{H}(w_{i})$. If $N_{H}(w_{i}^{k})\cap D_{sp}$ is an empty set, then we update $D_{sp}$ by removing $w_{i}^{k}$ and adding $v_{j}^{k}$ for some  $v_{j}^{k}\in N_{H}(w_{i})$, and call the updated set $D_{sp}'$. We do this for each $i$ from $1$ to $n$. Note that even for the updated set $ D_{sp}'$, we have $|D_{sp}'\cap (V^{k}\cup W^{k})|\leq |D_{sp}|/2$. Also, in the updated set $D_{sp}'$, for each $w_{i}^{k}$, $N_{H}(w_{i}^{k})\cap (D_{sp}\cap V^{k})$ is non-empty. Hence $|D_{sp}'\cap V^{k}|\leq |D_{sp}|/2$ and $D_{sp}'\cap V^{k}$ dominates $W^{k}$. Therefore the set $D=\{v_{i}\mid v_{i}^{k}\in D_{sp}'\cap V^{k}\}$ is a dominating set of $G$. Also $|D|\leq |D_{sp}|/2$.

By above arguments, we may conclude that the Algorithm $3$ produces a dominating set $D$ of the given graph $G$ in polynomial time, and  $|D|\leq |D_{sp}|/2$. Hence, $ |D|\leq \frac{|D_{sp}|}{2}\leq \alpha \frac{|D_{sp}^{*}|}{2}\leq \alpha |D^{*}|.$

Also $\alpha=(1-\epsilon) \ln(|V_{H}|) \approx (1-\epsilon) \ln(|V|)$ where $|V_{H}|=5|V|$.  Therefore the Algorithm \textbf{APPROX-DOMINATING-SET} approximates the minimum dominating set within ratio $(1-\epsilon) \ln(|V|)$ for some $\epsilon>0$. By Theorem~\ref{dom:ap}, if the minimum dominating set can be approximated within ratio $(1-\epsilon) \ln(|V|)$ for some $\epsilon>0$, then NP $\subseteq $ DTIME $(|V|^{O(\log\log|V|)})$. Hence, if the \textsc{Minimum Semipaired Domination} problem can be approximated within ratio $(1-\epsilon) \ln(|V_{H}|)$ for some $\epsilon>0$, then NP $\subseteq $ DTIME $(|V_{H}|^{O(\log\log|V_{H}|)})$. This proves that the \textsc{Minimum Semipaired Domination} problem cannot  be approximated within $(1-\epsilon)\ln(|V_{H}|)$ unless NP $\subseteq$ DTIME $(|V_{H}|^{O(\log\log|V_{H}|)})$.
\end{proof}

\section{Conclusion}\label{Sec:9}
In this paper, we initiate the algorithmic study of the \textsc{Minimum Semipaired Domination} problem. We have resolved the complexity status of the problem for bipartite graphs, chordal graphs and interval graphs. We have proved that the \textsc{Semipaired Domination Decision} problem is NP-complete for bipartite graphs and split graphs. We also present a linear-time algorithm to compute a  semi-PD-set of  minimum cardinality for interval graphs and trees.  A $1+\ln(2\Delta+2)$ approximation algorithm for the \textsc{Minimum Semipaired Domination} problem in general graphs is given, and we prove that it can not be approximated within any sub-logarithmic factor. It will be interesting to study better approximation algorithms for this problem for bipartite graphs, chordal graphs and other important graph classes.

\end{document}